
\documentclass{article}

\usepackage{microtype}
\usepackage[final]{graphicx}
\usepackage{subfigure}
\usepackage{booktabs} 
\usepackage{enumitem}
\usepackage{caption}
\usepackage{diagbox}
\usepackage{multirow}
\usepackage[utf8]{inputenc}
\usepackage{algorithm}
\usepackage[noend]{algpseudocode}
\usepackage{ifthen}
\newboolean{icml}
\setboolean{icml}{false}

\usepackage{hyperref}




\ifthenelse{\boolean{icml}}
{
\usepackage{icml2025}
}
{
\usepackage[accepted]{icml2025}
}

\usepackage{amsmath}
\usepackage{amssymb}
\usepackage{mathtools}
\usepackage{amsthm}
\usepackage{xspace}
\usepackage[]{xcolor}
\usepackage{thm-restate}
\usepackage{subcaption}

\usepackage{listings}
\lstdefinestyle{myGrammarStyle}{
    basicstyle=\scriptsize\ttfamily, 
    commentstyle=\color{gray},
    keywordstyle=\color{blue},
    stringstyle=\color{orange},
    numbers=left, 
    numberstyle=\tiny\color{gray}, 
    breaklines=true, 
    frame=single, 
    framesep=3pt, 
    xleftmargin=5pt, 
    xrightmargin=5pt, 
    backgroundcolor=\color{blue!4}, 
    tabsize=2, 
    captionpos=b, 
    aboveskip=5pt, 
    belowskip=5pt, 
    linewidth=0.9\linewidth, 
    escapeinside={(*@}{@*)}, 
}

\usepackage[capitalize,noabbrev]{cleveref}

\theoremstyle{plain}
\newtheorem{theorem}{Theorem}[section]

\newtheorem{lemma}[theorem]{Lemma}

\theoremstyle{definition}
\newtheorem{definition}[theorem]{Definition}

\theoremstyle{remark}

\usepackage[textsize=tiny]{todonotes}


\ifthenelse{\boolean{icml}}
{
\icmltitlerunning{CRANE: Reasoning with constrained LLM generation}
}
\icmltitlerunning{\icmltitlerunning{CRANE: Reasoning with constrained LLM generation}
}

\begin{document}

\twocolumn[
\ifthenelse{\boolean{icml}}
{
\icmltitle{CRANE: Reasoning with constrained LLM generation}
}
{
\icmltitle{CRANE: Reasoning with constrained LLM generation}

}



\icmlsetsymbol{equal}{*}

\begin{icmlauthorlist}
\icmlauthor{Debangshu Banerjee}{yyy,equal}
\icmlauthor{Tarun Suresh}{yyy,equal}
\icmlauthor{Shubham Ugare}{yyy}
\icmlauthor{Sasa Misailovic}{yyy}
\icmlauthor{Gagandeep Singh}{yyy}
\end{icmlauthorlist}

\icmlaffiliation{yyy}{Department of Computer Science, University of Illinois Urbana-Champaign, USA}

\icmlcorrespondingauthor{Debangshu Banerjee}{db21@illinois.edu}


\icmlkeywords{Machine Learning, ICML}

\vskip 0.3in]



\printAffiliationsAndNotice{\icmlEqualContribution} 

\newcommand{\todos}[1]{\textcolor{blue}{#1}}
\newcommand{\sasa}[1]{\textcolor{red}{SM: #1}}

\newcommand{\Tool}{\textsc{CRANE}\xspace}
\newcommand{\syncode}{\textsc{SynCode}\xspace}
\newcommand{\itergen}{\textsc{IterGen}\xspace}
\ifthenelse{\boolean{icml}}
{
\newcommand{\upto}{9\%}
}
{
\newcommand{\upto}{10\%}
}

\newcommand{\stdUnconstrained}{Unconstrained w/o CoT\xspace}
\newcommand{\stdConstrained}{Constrained\xspace}
\newcommand{\cotConstrained}{Constrained CoT\xspace}
\newcommand{\cotUnconstrained}{Unconstrained CoT\xspace}
\newcommand{\Relu}{ReLU}
\newcommand{\class}{TC^{0}}
\newcommand{\inalpha}{\Sigma}
\newcommand{\tapealpha}{\Gamma}
\newcommand{\transitionFunc}{\delta}
\newcommand{\finalState}{F}
\newcommand{\stateset}{Q}
\newcommand{\inset}{\Sigma^{*}}
\newcommand{\vect}[1]{\pmb{#1}}
\newcommand{\nlspace}{NL}
\newcommand{\outset}{O}
\newcommand{\lang}[1]{L(#1)}
\newcommand{\runtime}[1]{t(#1)}
\newcommand{\regG}{G_c}
\newcommand{\augG}{G_a}
\newcommand{\rGM}{R_{M}}
\newcommand{\vGM}{V_{M}}
\newcommand{\hatConfig}[1]{\overline{#1}}

\newcommand{\llm}{\mathcal{L}}
\newcommand{\llmG}[2]{\llm{}_{#2}(#1)}
\newcommand{\llmMG}[2]{\llm_{M,{#2}}(#1)}
\newcommand{\llmFormal}{\mathcal{L}_{f}}
\newcommand{\llmFormalG}[3]{\llm{}_{#3}^{(#1)}(#2)}
\newcommand{\llmFormalM}[2]{\mathcal{L}_{M,f}^{(#1)}(#2)}
\newcommand{\llmFormalMG}[3]{\mathcal{L}_{M, #3}^{(#1)}(#2)}
\newcommand{\inalphaLLM}{V}
\newcommand{\insetLLM}{V^{*}}
\begin{abstract}
Code generation, symbolic math reasoning, and other tasks require LLMs to produce outputs that are both syntactically and semantically correct. Constrained LLM generation is a promising direction to enforce adherence to formal grammar, but prior works have empirically observed that strict enforcement of formal constraints often diminishes the reasoning capabilities of LLMs. In this work, we first provide a theoretical explanation for why constraining LLM outputs to very restrictive grammars that only allow syntactically valid final answers reduces the reasoning capabilities of the model. Second, we demonstrate that by augmenting the output grammar with carefully designed additional rules, it is always possible to preserve the reasoning capabilities of the LLM while ensuring syntactic and semantic correctness in its outputs. Building on these theoretical insights, we propose a reasoning-augmented constrained decoding algorithm, \Tool, which effectively balances the correctness of constrained generation with the flexibility of unconstrained generation. 
Experiments on multiple open-source LLMs and benchmarks show that \Tool{} significantly outperforms both state-of-the-art
constrained decoding strategies and standard unconstrained decoding, showing up to \upto{} points accuracy improvement over baselines on challenging symbolic reasoning benchmarks GSM-symbolic and FOLIO. The code is available at \href{https://github.com/uiuc-focal-lab/CRANE}{\texttt{CRANE}}.

\end{abstract}

\section{Introduction}
\label{sec:intro}
Transformer-based large language models (LLMs) are widely used in AI systems that interact with traditional software tools like Python interpreters \cite{openai_tools, programofThoughts} \ifthenelse{\boolean{icml}}
{}{for code generation \cite{cornstack,code-watermark,surveycode}}, logical solvers \cite{pan2023logiclmempoweringlargelanguage, Olausson_2023}, and theorem provers \cite{wu2022autoformalization, yang2023leandojotheoremprovingretrievalaugmented}. These tools impose specific syntactic and semantic constraints on their inputs, requiring LLMs to produce outputs in the correct format. For instance, if an LLM provides output to a specific logical solver \cite{han2024FOLIOnaturallanguagereasoning}, the output must be parsable by that solver. Similarly, Wolfram Alpha~\cite{Wolfram} translates user queries about mathematical problems into a domain-specific language (DSL) to utilize symbolic solvers. However, as highlighted in recent studies \cite{syncode, guidance, poesia2022synchromesh}, pre-trained LLM outputs do not always comply with downstream tools' input requirements. Constrained decoding algorithms \cite{syncode, poesia2022synchromesh} address this issue by projecting the LLM output onto user-specified formal constraints (e.g., syntactic rules defined by a context-free grammar $G$), thereby ensuring that the input requirements of downstream tasks are satisfied.  

\noindent As illustrated in Fig.~\ref{fig:example}, constrained decoding improves the syntactic correctness of LLM outputs (e.g., generating a well-formed mathematical expression). 
However, it does not guarantee functional correctness (e.g., ensuring the expression correctly answers the user's query). 
Recent works such as \citet{speakFree} have empirically observed that imposing constraints on LLM outputs can, in some cases, reduce functional correctness for specific tasks. 
\citet{speakFree} attributes this reduction in functional accuracy to a decline in the LLM's reasoning capabilities under constrained decoding. 
This observation  raises the following open questions:
\begin{itemize}[noitemsep, nolistsep, leftmargin=*]
    \item \textbf{RQ1}: Do LLMs truly lose reasoning capabilities under constrained decoding?
    \item \textbf{RQ2}: How can we leverage the benefits of constrained decoding in reducing syntax errors while preserving the unconstrained reasoning capabilities of LLMs? 
\end{itemize}
\begin{figure*}[t] 
    \centering
    \includegraphics[width=0.9\linewidth]{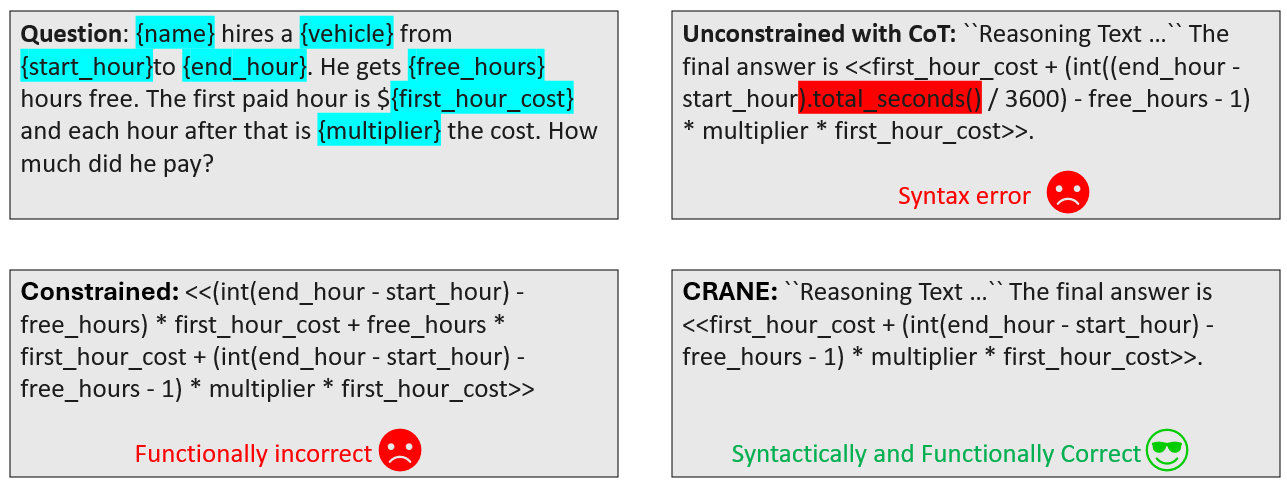} 
    \caption{An example from the GSM-symbolic dataset (variables in blue) where unconstrained generation produces syntactically incorrect output, while constrained generation provides a syntactically valid but incorrect answer. \Tool, however, generates a correct answer.}
    \label{fig:example} 
\end{figure*}

\textbf{Key Challenges:} First, we need to formally identify the root cause of the reduction in functional accuracy of end-to-end systems when a pre-trained LLM operates under constrained generation. 
Unlike the empirical observations in \cite{speakFree}, we seek a formal justification for this reduction that is not limited to specific LLMs used in experiments but extends to any LLM, including more powerful ones developed in the future.

Second, we must design cost-efficient decoding strategies that address the shortcomings of existing constrained decoding methods while improving functional accuracy. In this work, we do not consider task-specific fine-tuning of LLMs, as fine-tuning for each task is compute-intensive. Unlike constrained decoding, fine-tuning does not guarantee that the LLM output adheres to formal constraints.

\textbf{Contributions: }We make the following contributions to improve the functional accuracy of the end-to-end system:
\begin{itemize}[noitemsep, nolistsep, leftmargin=*]
\item
We theoretically show that LLMs with a constant number of layers, which are known to be capable of simulating \( n \) steps of any given Turing machine \( M \) with \( O(n) \) reasoning steps \cite{expressivity1}, can only solve problems within a relatively restrictive circuit complexity class when constrained to generate outputs that always conform to a restrictive grammar \( G \) defining only the valid output strings. This demonstrates that, for restrictive grammar, constrained decoding reduces the problem-solving capabilities of LLMs.

\item We theoretically show that the loss of expressivity of LLMs under constrained decoding arises because the output grammar $G$ is too restrictive to accommodate the intermediate reasoning steps required to compute the answer. We further demonstrate that augmenting the grammar $G$  with specific additional production rules enables the LLM to generate the intermediate reasoning steps while ensuring that the final output always adheres to the intended output structure. With the augmented grammar $\augG$, the LLM retains its expressivity under constrained decoding.

\item We propose a simple and cost-efficient decoding strategy, \Tool (\textbf{C}onstrained \textbf{R}easoning \textbf{A}ugmented Ge\textbf{ne}ration). 
\Tool effectively alternates between unconstrained generation for reasoning and constrained generation for producing structurally correct outputs. This allows the model to produce syntactically valid outputs while enabling the LLM to reason.
Our detailed experiments on multiple open-source LLMs and benchmarks demonstrate that \Tool{} significantly outperforms both SOTA constrained decoding strategies and standard unconstrained decoding, showing up to a \upto{} improvement over baselines on challenging symbolic reasoning benchmarks GSM-symbolic~\cite{mirzadeh2024gsmsymbolicunderstandinglimitationsmathematical}) and FOLIO~\cite{han2024FOLIOnaturallanguagereasoning}.
\end{itemize}

Next, we provide the notations and necessary background on constrained decoding, including the definition of Turing machines and relevant circuit complexity classes.

\vspace{-9pt}
\section{Preliminaries}
\label{sec:prelims}
\textbf{Notations:} In the rest of the paper, we use small case letters ($x$) for constants, bold small case letters ($\vect{x}$) for strings, capital letters $X$ for functions, $\cdot$ for string concatenation, $|\vect{x}|$ to denote the length of the string $\vect{x}$. 
We use LLM to refer to transformer-based LLMs with a fixed number of layers. 
\subsection{Constrained LLM Decoding}
Autoregressive language models $\llm{}$ decode output iteratively by generating tokens from a probability distribution over the vocabulary $\inalphaLLM$. The distribution is derived by applying the softmax function to the model's scores $\mathcal{S}$. 
Common decoding methods include greedy decoding, temperature sampling, and beam search.
Constrained LLM decoding extends this process by excluding specific tokens at certain positions, such as avoiding harmful words or adhering to a user-defined output grammar for languages like JSON or SQL~\cite{poesia2022synchromesh, ugare2024syncodellmgenerationgrammar, willard2023efficient}. 
At each decoding step, a binary mask $m \in \{0, 1\}^{|\inalphaLLM|}$, generated by a function $f_m$, specifies valid tokens ($m_i = 1$) and excluded tokens ($m_i = 0$). 
Decoding is then performed on the masked probability distribution $m \odot \textit{softmax}(\mathcal{S})$, where $\odot$ denotes element-wise multiplication. 

\subsection{Deterministic LLM Decoding}
\Tool is compatible with various decoding strategies, both constrained and unconstrained, allowing the output of \(\llm{}\) to be stochastic. 
However, following existing works \cite{circuit1, tc0, expressivity2} and for simplicity in the theoretical setup in Section~\ref{sec:expressivityTheory}, we assume that the output of \(\llm\) on any input string \(\vect{x}\) is deterministic in both constrained and unconstrained settings.

Similar to prior works \cite{tc0, expressivity1}, we model a single autoregressive step as a deterministic function $\llmFormal$ that predicts the next token given a specific input. 
Formally,
\begin{definition}[Deterministic LLM Step]
A single autoregressive step of an LLM is modeled as a deterministic function $\llmFormal : \insetLLM \to \inalphaLLM$, where $\inalphaLLM$ is the finite vocabulary and $\insetLLM$ represents the set of all finite strings over $\inalphaLLM$. For an input string $\vect{x} \in \insetLLM$, the LLM predicts the next token $\llmFormal(\vect{x})$.
\end{definition}
\begin{definition}[Deterministic Unconstrained Decoding]
For an input string $\vect{x}$, the deterministic output string $\vect{y}$ selected from the output distribution of a LLM using a decoding algorithm (e.g., greedy decoding) is denoted as $\vect{y} = \llm{}(\vect{x})$ where $\llm: V^{*} \to V^*$. $\llm(\vect{x})$ is the most likely output sequence according to learned distribution on $\vect{x}$.
\end{definition}

The output $\vect{y} = \llm{}(\vect{x})$ is computed iteratively with $|\vect{y}|$ autoregressive steps defined by $\llmFormal$. For each $1 \leq i \leq |\vect{y}|$, and the recurrence relation $\llmFormal^{(i)}(\vect{x}) = \llmFormal^{(i-1)}(\vect{x}) \cdot \llmFormal(\llmFormal^{(i-1)}(\vect{x}))$ where $\llmFormal^{(0)}(\vect{x}) = \vect{x}$ and $\cdot $ denotes string concatenation. Here, $\vect{x} \cdot \vect{y} = \llmFormal^{|\vect{y}|}(\vect{x})$. Similarly, under constrained decoding with a grammar $G$ we define:
\begin{definition}[Deterministic Constrained Decoding under Grammar]
Under constrained decoding with a formal grammar $G$, the output string $\vect{y}_{G}$ is selected from the constrained output distribution and is denoted as $\vect{y}_{G} = \llmG{\vect{x}}{G}$. The output of $i$-th constrained autoregressive step with $G$ is $\vect{x}\cdot\vect{y}^{(i)}_G = \llmFormalG{i}{\vect{x}}{G}$ and $\vect{x}\cdot\vect{y}_G = \llmFormalG{|\vect{y}|}{\vect{x}}{G}$. 
\end{definition}
The constrained output $\vect{y}_G$ is always in the grammar $\vect{y}_G \in \lang{G}$ where $\lang{G}$ is the language defined by $G$. For sound-constrained decoding algorithms, if the unconstrained output $\vect{y} = \llm(\vect{x})$ in the grammar $\vect{y} \in \lang{G}$, the constrained output remains unchanged, i.e., $\llm(\vect{x}) = \llmG{\vect{x}}{G}$.


\subsection{LLM Expressivity}
We discuss the notations and background related to Turing machines, and relevant uniform circuit complexity classes. 
Turing machines are popular mathematical computation models used to analyze resource requirements (e.g. time and space complexity) and the hardness of computation problems.
Formally, a Turing machine is defined as:
\begin{definition}[Turing Machine]
A Turing machine $M$ with $k$ work tapes and an output tape is a $8$-tuple 
\[
M = \langle \inalpha, \tapealpha, k, b, \stateset, q_{0}, \transitionFunc, \finalState \rangle,
\]
where $\inalpha$ is the finite input alphabet, $\tapealpha$ is the finite tape alphabet with $\inalpha \subseteq \tapealpha$, $b \in \tapealpha \setminus \inalpha$ is a special blank symbol, $\stateset$ is a finite set of states, $q_0 \in \stateset$ is the initial state, $\transitionFunc: \left( Q \setminus F \right) \times \Gamma^{k+2} \to Q \times \Gamma^{k+1} \times \{0, +1, -1 \}^{k+2}$ is the transition function (where $-1, 1, 0$ represent moving the tape head left, right, or staying in place, respectively), and $\finalState \subseteq \stateset$ is the set of halting states.
\end{definition}

Let $\inset $ denote the set of all finite strings over the input alphabet $\inalpha $. Given an input string $\vect{s} \in \inset $, the computation of $M $on $s $ is a sequence of configurations starting from the initial configuration. 
Each configuration $\gamma $ is a tuple containing the current state $q \in \stateset $, the contents of the input tape, the $k$ work tapes, the output tape, and the current head positions of all $k+2$ tapes.  
For each configuration, $\gamma_i$ ($i \in \mathbb{N}$), the transition function $\transitionFunc$ computes the next configuration $\gamma_{i+1}$ based on the current state $q$ and the values on the $k+2$ tapes at the current head positions. It updates the head positions, writes to the output tape (possibly leaving it unchanged if no new symbol is written), and advances to the next configuration. 
For each $i$, computation of $\gamma_{i+1}$ from $\gamma_{i}$ defines a single step of the Turing machine.

The computation of $M$ on input $\vect{s}$ halts if $M$ reaches a halting state $q \in \finalState $. If $M$ halts, the output corresponding to $\vect{s}$ is written on the output tape. Additional details about the computation of the Turing machine are in Appendix~\ref{appen:turningDetails}.

Before discussing existing expressivity results for constant-layer LLMs, we briefly introduce relevant uniform constant-depth circuit complexity classes, e.g. logspace uniform-$\class$, which provide an upper bound on the computational power of LLMs that do not employ reasoning steps, as seen in methods like Chain-of-Thought \cite{cotGoogle}. 
\begin{definition}[Boolean Circuit]
A Boolean circuit is a computational model for evaluating Boolean functions over fixed-length binary strings. It is represented as a directed acyclic graph (DAG), where the leaf nodes correspond to input binary variables or their negations, and the internal nodes perform operations from a predefined set of operations $\mathcal{B}$ (e.g., AND ($\wedge$), OR ($\vee$), etc.). One or more marked nodes in the graph represent the circuit's output.
\end{definition}
The structure of the DAG specifies the computation of the Boolean function by propagating input values through the graph.
The complexity of a circuit is determined by its size (the number of nodes) and depth (the longest path in the graph). 
Since a single circuit only defines a boolean function for fixed-length inputs, a family of circuits is required—one for each input length—to characterize a computational problem where input lengths vary. 
Unlike Turing machines, whose computation does not depend on input length, circuit families have a separate circuit for each input length, which can differ entirely. 
This non-uniformity can lead to degenerate cases where non-uniform circuit families solve undecidable problems \cite{complexityBook}.
To address this issue, complexity theorists enforce uniformity conditions, requiring circuits for different input sizes to be related, resulting in uniform circuit families. For further details and formal definitions of circuit classes, refer to \cite{complexityBook}. In this work, we focus on constant-depth, polynomial-sized logspace-uniform threshold circuits ($TC^{0}$), where $\mathcal{B}$ contains only threshold gates (a formal definition is  in Appendix~\ref{appen:threshInfo}).

\vspace{-5pt}
\section{Expressivity of Constrained Decoding}
\label{sec:expressivityTheory}
First, we show that any constant-layer LLM $\llm{}$ under constrained decoding loses expressivity. 
We identify the class of problems and the corresponding output grammars $G$ such that when imposed on the outputs of any constant-layer LLM, the problems cannot be solved unless there is a collapse in fundamental complexity classes that are widely believed to be unequal (e.g., $\class \neq \nlspace$)\footnote{NL refers to nondeterministic log-space}. 
\vspace{-3pt}
\subsection{Limitation of Constrained Decoding}
Next, we present the high-level idea behind Proposition~\ref{thm:compLimit} that shows the limitation of constrained LLM decoding when the output grammar is too restrictive. 
We consider problems where the number of possible outputs is finite, and thus the set of all possible outputs $\outset$ can be expressed as a simple regular language.
Consequently, $\regG$ that encodes the output set $\outset$, i.e., $\outset = \lang{\regG}$, where $\lang{\regG}$ denotes the language defined by the grammar $\regG$. 
For instance, any decision problem (yes/no answer) such as st-connectivity that asks for vertices s and t in a directed graph, if t is reachable from s can be answered within a single-bit output i.e. $\lang{\regG} = \{0, 1\}$. 
This implies that constrained decoding with the output grammar $\regG$ allows only a single autoregressive step for any $\llm{}$ on all inputs.

A series of existing works \cite{circuit1, cicuit2, circuit3, tc0} establish that, under suitable assumptions, a single autoregressive step on an input with length $n$ for any constant-depth LLM can be represented as a constant-depth circuit. 
Since, for decision problems, the constrained decoding step permits only a single autoregressive step, any LLM can only solve problems within the corresponding circuit complexity class. 
We build on the most recent result from \cite{tc0}, which shows that a single autoregressive step of any LLM with a constant number of layers on an input of length $ n $ can be simulated by a logspace-uniform constant-depth threshold circuit family. 
This result allows the LLM to use floating-point numbers with $\log(n)$ precision when processing inputs of size $ n $, ensuring that the precision scales with $ n $ and preventing floating-point representation issues for large $n$. 
We denote such LLMs as log-precision LLMs.

Let $\vect{x} \cdot \vect{y}^{(i)}$ denote the output after the $i$-th autoregressive step of an LLM $\llm{}$ under constrained decoding with an output grammar $G$ on input $\vect{x}$. Then, we have $\vect{x} \cdot \vect{y}^{(i)} = \llmFormalG{i}{\vect{x}}{G}$, and for any $i$, $\vect{y}^{(i)}$ is always a valid prefix of a string in $\lang{G}$, i.e., there exists a (possibly empty) string $\vect{\alpha}^{(i)}$ such that $\vect{y}^{(i)} \cdot \vect{\alpha}^{(i)} \in \lang{G}$. Now, for any output grammar $\regG$ where the output set $O = L(\regG)$ is finite, we show that the output $\llmG{\vect{x}}{\regG}$ for any input $\vect{x}$ of size $|\vect{x}| = n$ can be computed using constant-depth threshold circuits.
\begin{restatable}{proposition}{compLimitLemma}
\label{thm:compLimit}
For any log-precision LLM $\llm{}$ with constant layers there exists a logspace-uniform thershold circuit $Th_{n}$ such that $\llmG{\vect{x}}{\regG} = Th_n(\vect{x})$ holds for all inputs $\vect{x}$ with size $|\vect{x}| = n$ and $n \in \mathbb{N}$.
\end{restatable}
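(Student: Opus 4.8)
The plan is to exploit the finiteness of the output set $\outset = \lang{\regG}$ to reduce constrained decoding to a composition of a \emph{constant} number of the single-step circuits guaranteed by \cite{tc0}. Since $\outset$ is finite, let $\ell = \max_{\vect{w} \in \outset} |\vect{w}|$; this is a constant independent of $n$. A sound constrained decoder only ever emits strings in $\lang{\regG}$, and every partial output $\vect{y}^{(i)}_{\regG}$ is a valid prefix of such a string, so on every input $\vect{x}$ --- regardless of $|\vect{x}| = n$ --- decoding halts after a constant number $c \le \ell + 1$ of constrained autoregressive steps (the $+1$ accounting for a terminating token). Hence $\llmG{\vect{x}}{\regG}$ is the concatenation of the tokens emitted over these $c$ steps.

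Next I would show that each individual constrained step is computable by a logspace-uniform constant-depth threshold circuit. Fix step $i$; its input is $\vect{x} \cdot \vect{y}^{(i-1)}_{\regG}$, a string whose length lies in $[n,\, n+\ell]$. Padding to the fixed length $N = n + \ell$ with a reserved symbol, the result of \cite{tc0} supplies a logspace-uniform $\class$ circuit $S^{(N)}$ of size $\mathrm{poly}(N) = \mathrm{poly}(n)$ that computes the next-token scores $\mathcal{S}$ of the constant-layer log-precision LLM. The constraint mask $m^{(i)}$ depends only on which prefix of a string in $\lang{\regG}$ has been produced so far, i.e.\ only on $\vect{y}^{(i-1)}_{\regG}$; since the set of valid prefixes of $\lang{\regG}$ is finite and $|\vect{y}^{(i-1)}_{\regG}| \le \ell$, the map $\vect{y}^{(i-1)}_{\regG} \mapsto m^{(i)}$ is a finite function realized by a constant-size (hence $\class$) gadget. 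Applying $m^{(i)}$ to $\textit{softmax}(\mathcal{S})$ and taking the greedy token is an argmax over the $|\inalphaLLM| = O(1)$ masked scores, each an $O(\log n)$-precision number, which constant-depth threshold circuits perform via pairwise comparisons. Composing $S^{(N)}$, the mask gadget, and the argmax selector yields a logspace-uniform constant-depth threshold circuit $C_i^{(n)}$ outputting the $i$-th emitted token (and a reserved end/padding marker once decoding has halted).

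Finally I would assemble $Th_n$ by cascading $C_1^{(n)}, \dots, C_c^{(n)}$: the input $\vect{x}$ is fanned out to every stage, the token produced by $C_i^{(n)}$ is routed into $C_{i+1}^{(n)}$ alongside $\vect{x}$, and the stage outputs, concatenated and truncated at the first end marker, form the output bits of $Th_n$. Because $c$ is constant, $Th_n$ has depth a constant multiple of the constant depth of the $C_i^{(n)}$ and size $c \cdot \mathrm{poly}(n) = \mathrm{poly}(n)$, so it is a constant-depth polynomial-size threshold circuit, and by construction $Th_n(\vect{x}) = \llmG{\vect{x}}{\regG}$ for every $\vect{x}$ with $|\vect{x}| = n$. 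For uniformity, a logspace transducer on input $1^n$ runs the logspace machines describing each $S^{(N)}$ and the fixed gadgets in turn, offsets their gate indices by a running counter, and emits the (fixed) inter-stage and fan-out wiring; since concatenating and re-indexing logspace-computable circuit descriptions is itself computable in logspace, $Th_n$ is logspace-uniform.

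The main obstacle is the second step: ensuring that wrapping the single-step circuit of \cite{tc0} in the grammar constraint stays inside $\class$ and stays logspace-uniform. This is precisely where the hypothesis that $\outset = \lang{\regG}$ is finite does the work --- it makes both the number of decoding steps and the prefix-tracking automaton constant-size, so the grammar machinery collapses to $O(1)$-size gadgets and the remaining argument is careful bookkeeping of a constant-length circuit composition, with all the genuine circuit-complexity content inherited from \cite{tc0}.
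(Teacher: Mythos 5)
Your proposal is correct and follows essentially the same route as the paper's proof: use the finiteness of $L(\regG)$ to bound the number of autoregressive steps by a constant, simulate each step with the logspace-uniform constant-depth threshold circuit from \cite{tc0} composed with a constant-size grammar mask, and stack the constant number of stages to obtain a constant-depth, polynomial-size, logspace-uniform circuit. Your treatment of the prefix-tracking mask and the uniformity of the composed circuit is somewhat more explicit than the paper's, but the underlying argument is the same.
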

\textbf{Proof:} The formal proof is in Appendix~\ref{sec:proof}. 

From Proposition~\ref{thm:compLimit}, it follows that for any decision problem under constrained decoding, an LLM can only solve problems within the logspace-uniform $\class$ class (constant-depth threshold circuits). 
Consequently, any decision problem believed to lie outside this class cannot be solved under constrained decoding. 
The previously mentioned st-connectivity problem is known to be $NL$-complete \cite{complexityBook}. This implies that unless $\class = NL$, no LLM under constrained decoding can solve st-connectivity.  
Additionally, \cite{expressivity2, expressivity1} show that given any Turing machine $M$ there exists a log-precision LLM with a constant number of layers that can simulate $O(\runtime{n})$ steps of $M$ using $O(\runtime{n})$ autoregressive steps, where $\runtime{n}$ denotes a polynomial in the input size $n$.

\begin{restatable}{lemma}{expressivity}
\label{lem:expressivity}
For any Turing machine $M$ with tape alphabet $\tapealpha$, there exists a constant depth LLM $\llm{}_{M}$ with finite vocabulary $\tapealpha \subseteq \vGM$ and log-precision that can simulate $\runtime{n}$ steps of $M$ with $\runtime{n}$ autoregressive steps. 
\end{restatable}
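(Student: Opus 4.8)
The plan is to derive Lemma~\ref{lem:expressivity} as an essentially immediate corollary of the cited simulation results of \citet{expressivity1, expressivity2}, with the main work being to reconcile their statements with the formal model of $\llmFormal$ and deterministic decoding introduced in Section~\ref{sec:prelims}. First I would recall the precise form of the existing result: for every Turing machine $M$ there is a transformer with a constant number of layers, fixed precision growing as $\log$ of the sequence length, and a finite vocabulary, such that when it is run autoregressively (feeding each generated token back as input, i.e. exactly the recurrence $\llmFormal^{(i)}$ of our Section~\ref{sec:prelims}), its successive outputs encode the successive configurations $\gamma_0, \gamma_1, \gamma_2, \dots$ of $M$. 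The key quantitative claim there is that each step of $M$ is simulated by $O(1)$ autoregressive steps, so $t(n)$ steps of $M$ cost $O(t(n))$ autoregressive steps; taking $t(n)$ to be a polynomial in $n$ keeps the sequence length polynomial, which is what makes the $\log$-precision assumption consistent (the precision needed never exceeds $\log(\mathrm{poly}(n)) = O(\log n)$).

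Next I would set up the vocabulary inclusion $\tapealpha \subseteq \vGM$ explicitly. Since a configuration of $M$ is a finite tuple built from the state set $\stateset$, head-position markers, and symbols from the tape alphabet $\tapealpha$, one can choose $\vGM$ to be $\tapealpha$ together with a finite set of additional bookkeeping tokens (state labels, head markers, separators, a halting token). This is a finite set because $M$ itself is a finite object, so $\llm_M$ has a finite vocabulary as required, and $\tapealpha \subseteq \vGM$ holds by construction. I would then define the autoregressive simulation formally: on input $\vect{s}$ (a suitable encoding of the initial configuration $\gamma_0$), the claim is that $\llmFormal_M^{(c\cdot j)}(\vect{s})$ encodes $\gamma_j$ for a constant $c$, proved by induction on $j$ using the single-step correctness of the construction in \citet{expressivity1}. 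After $t(n)$ steps of $M$ this uses $c\cdot t(n) = O(t(n))$ autoregressive steps, and since $\tapealpha \subseteq \vGM$ the portion of the output tape written by $M$ is directly readable from the transformer's output string.

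The main obstacle — or rather the main point requiring care rather than genuine difficulty — is ensuring the precision and uniformity bookkeeping is internally consistent with how "log-precision LLM" was defined earlier in the paper: the precision is allowed to be $\log$ of the current input/context length, and since we only ever run for $O(t(n)) = \mathrm{poly}(n)$ steps the context length stays polynomial in $n$, so $O(\log n)$ bits of precision always suffice and the cited construction applies verbatim. A secondary subtlety is that \citet{expressivity1, expressivity2} may phrase their result for a slightly different decoding convention (e.g. sampling versus greedy, or with an explicit stop condition); I would note that their construction is deterministic in the relevant sense — the next-token distribution is a point mass — so it fits the deterministic-decoding definitions of Section~\ref{sec:prelims} without modification. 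Given all this, the lemma follows by directly invoking the construction and bundling the constant slowdown $c$ into the $O(\cdot)$; no new technical machinery is needed, which is why I expect the proof in the appendix to be short and to consist mainly of citation plus the vocabulary and precision reconciliation above.
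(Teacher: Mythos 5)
Your proposal matches the paper's proof, which is simply a direct invocation of Theorem~2 of \citet{expressivity1}: the appendix proof is one line of citation, followed by a remark that the construction produces the configuration-encoding sequence $\hatConfig{\gamma_1}\cdots\hatConfig{\gamma_{\runtime{n}}}\cdot M(\vect{x})$, exactly as you describe. Your additional bookkeeping on the vocabulary $\tapealpha \subseteq \vGM$, log-precision consistency, and deterministic decoding is correct and consistent with (indeed more explicit than) what the paper writes.
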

\textbf{Proof:} The proof follows from Theorem~2 in \cite{expressivity1} further details in Appendix~\ref{sec:proof}.

Proposition~\ref{thm:compLimit} and Lemma~\ref{lem:expressivity} together imply that there exist problems, such as st-connectivity, an LLM can solve that in an unconstrained setting but cannot be solved under constrained decoding (unless logspace-uniform $\class = NL$).

\subsection{Reasoning with Augmented Grammar}
The reduction in LLM expressivity under constrained decoding, as established in Proposition~\ref{thm:compLimit}, arises primarily because the language of all valid output strings, $ L(\regG) $, is too restrictive and does not permit large (non-constant) reasoning chains. This naturally leads to the question of whether it is possible to augment any output grammar $G$ with additional production rules to construct an augmented grammar $\augG$ that can accommodate reasoning steps while preserving the expressivity of $\llm{}$ even under constrained decoding. At the same time, $\augG$ should remain nontrivial—meaning it should not accept all possible strings, as in the unconstrained setting—so that it aligns with the practical objective of constrained decoding: guiding the LLM to generate syntactically and semantically valid outputs.

To achieve this, we enforce that the augmented grammar $ \augG $ always follows the structure $ \augG \to RG $, where the nonterminal symbol $ R $ captures the reasoning steps, and $ G $ represents the final output. This guarantees that for any string $ \vect{s} \in \lang{\augG} $, the final answer $ \vect{a} $ extracted from $ \vect{s} = \vect{r} \cdot \vect{a} $ always belongs to the original output grammar $ G $, i.e., $ \vect{a} \in \lang{G} $, with $ \vect{r} $ serving as the reasoning sequence leading up to the final output.  

Formally, we show that for any Turing machine $ M $ and a grammar $ G $ containing all valid outputs of $ M $, there exists an LLM $ \llm{}_{M} $ with a constant number of layers and log-precision, along with an augmented grammar $ \augG $ in the specified format, such that $ \llm_M $ can simulate $ \runtime{n} $ steps of $ M $ using $\runtime{n}$ autoregressive steps under constrained decoding with $ \augG $. Here $n \in \mathbb{N}$ and $\runtime{n}$ is a polynomial over $n$. The augmented grammar $ \augG$ may not be unique, and we provide one such construction. 

At a high level, $ \llm{}_M $ simulates the Turing machine $M$ by computing the encoded representations $ \hatConfig{\gamma_{i}} $ of the machine's configurations $ \gamma_{i} $ at each step $ i $ and storing them within the reasoning component (i.e., the string $ \vect{r} $) of the output. During each autoregressive step, $ \llm_M $ generates the next configuration based on the transition function of $ M $ and appends its encoding to the reasoning sequence. This process continues until $ M $ reaches a halting state, at which point $ \llm_M $ produces the final output $ \vect{a} $, which belongs to $ \lang{G} $. For any given $ M $, we define the rules $ \rGM $ that can parse the encodings $ \hatConfig{\gamma} $ of all possible configurations $ \gamma $. This ensures that the output $ \llmG{\vect{x}}{G_a} $ represents the full reasoning-augmented sequence, i.e., $\hatConfig{\gamma_{1}}\cdots \hatConfig{\gamma_{\runtime{n}}} \cdot M(\vect{x}) $, where $ M(\vect{x})$ is the final output of $ M $ on input $ \vect{x} $ of size $ n $ after $ \runtime{n} $ computational steps. The encodings $ \hatConfig{\gamma_{1}}, \dots, \hatConfig{\gamma_{\runtime{n}}} $ correspond to the configurations $ \gamma_{1}, \dots, \gamma_{\runtime{n}} $, as described below.

We begin by defining the vocabulary $\vGM$ for $ \llm_{M} $, which contains all tape symbols $ \tapealpha $ of $ M $ along with a finite set of auxiliary symbols $ \hatConfig{\gamma} $ that encode the corresponding configurations $ \gamma $. Similar to prior works \cite{expressivity1}, each configuration encoding $ \hatConfig{\gamma} $ represents the current state $ q $, the symbols at the current head position of $k + 2$ tapes (input, output and $k$ work tapes), and the head movement directions $\{0, +1, -1\}$ for each tape. Directions $\{0, +1, -1\}$ denote either staying in place ($ 0 $), moving left ($ -1 $), or moving right ($ +1 $) by a single position. Since the set of states $ Q $, the tape alphabet $ \tapealpha $, and the number of tapes $ k $ are all constants, the total number of possible encodings $ \hatConfig{\gamma} $ is also constant. Let $ \hatConfig{\Gamma} $ denote the set of all possible configuration encodings, i.e., $ \hatConfig{\Gamma} = \{\hatConfig{\gamma_{(1)}}, \dots, \hatConfig{\gamma_{(l)}}\} $, where $ l = |\hatConfig{\Gamma}| $. Given $\hatConfig{\Gamma} $ is finite and enumerable, we can define the rules of the augmented grammar $ \augG $ accordingly as follows.

\begin{align*}
 &\augG \to \rGM G; \;\;\; \rGM \to S \rGM; \;\;\; S \to \hatConfig{\gamma_{(1)}} \; |  \;\cdots\; | \hatConfig{\gamma_{(l)}} 
\end{align*}

The set of reasoning strings in $L(\rGM)$ essentially define a regular language over the configuration encodings $\hatConfig{\Gamma}$. Let, for any input $\vect{x}$ with size $n = |\vect{x}|$ a given Turing machine $M$ halts and compute the output $M(\vect{x})$ in $\runtime{n}$ steps that are polynomial in $n$. Then there exist $\llm{}_{M}$ compute $M(\vect{x})$ with $\runtime{n}$ autoregressive steps under constrined decoding with the augmented grammar $\augG \to \rGM G$. Suppose, $\llmMG{\vect{x}}{\augG}$ denotes the output of the LLM $\llm_{M}$ on input $\vect{x}$ under constrained decoding with grammar $\augG$ then
\begin{restatable}{proposition}{expressConstrain}
\label{thm:expressConstrain}
For any Turing machine $M$ with tape alphabet $\tapealpha$, there exists a constant depth LLM $\llm{}_{M}$ with finite vocabulary $\tapealpha \subseteq \vGM$ and log precision such that for any input $\vect{x}$ with $|\vect{x}| = n$, $\llmMG{\vect{x}}{\augG} = \vect{r} \cdot M(\vect{x})$ with $r \in V_{M}^{*}$ assuming $M$ halts on $\vect{x}$ in $\runtime{n}$ steps.
\end{restatable}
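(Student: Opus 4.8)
The plan is to reduce the statement to the unconstrained simulation guarantee of Lemma~\ref{lem:expressivity} together with the soundness property of constrained decoding from Section~\ref{sec:prelims}. First I would apply Lemma~\ref{lem:expressivity} to the given $M$ to obtain a constant-depth, log-precision LLM over a finite vocabulary containing $\tapealpha$, and then reformat its output so that, on any input $\vect{x}$ with $|\vect{x}| = n$, its deterministic unconstrained run emits exactly $\hatConfig{\gamma_1}\cdots\hatConfig{\gamma_{\runtime{n}}}\cdot M(\vect{x})$: at autoregressive step $i \le \runtime{n}$ it emits the single token $\hatConfig{\gamma_i} \in \hatConfig{\Gamma} \subseteq \vGM$ encoding the $i$-th configuration of $M$ on $\vect{x}$ (computed from $\transitionFunc$ applied to the relevant head symbols of the previous configuration, with tape contents reconstructed by attention over the already emitted encodings as in \cite{expressivity1}), and once the encoded state lies in $\finalState$ it reads $M(\vect{x})$ off the encoded output tape and emits it token by token. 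Because $|\stateset|$, $|\tapealpha|$, and $k$ are constants, $\hatConfig{\Gamma}$ is a finite set, so this is a finite vocabulary extension plus bounded postprocessing of the construction underlying Lemma~\ref{lem:expressivity}; the resulting $\llm_M$ is still constant-depth and log-precision and uses $\runtime{n}$ autoregressive steps.

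Second, I would observe that this unconstrained output lies in $\lang{\augG}$: the prefix $\hatConfig{\gamma_1}\cdots\hatConfig{\gamma_{\runtime{n}}}$ is derived from $\rGM$ via the rules $\rGM \to S\rGM$ and $S \to \hatConfig{\gamma_{(1)}} \mid \cdots \mid \hatConfig{\gamma_{(l)}}$, and $M(\vect{x}) \in \lang{G}$ since $G$ by hypothesis contains all valid outputs of $M$; hence the concatenation is generated by $\augG \to \rGM G$. Since every prefix of a string in $\lang{\augG}$ is extendable to a string in $\lang{\augG}$ (trivially, by the remaining suffix), the grammar mask $f_m$ of a sound constrained decoder never excludes the token that the unconstrained $\llm_M$ selects at any step, so masked greedy decoding coincides with unmasked greedy decoding throughout. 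Applying the soundness property of Section~\ref{sec:prelims} then yields $\llmMG{\vect{x}}{\augG} = \llm_M(\vect{x}) = \vect{r} \cdot M(\vect{x})$ with $\vect{r} = \hatConfig{\gamma_1}\cdots\hatConfig{\gamma_{\runtime{n}}} \in \vGM^{*}$, which is the claim.

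I expect the main obstacle to be the reformatting step, specifically getting the LLM to detect that $M$ has reached a state in $\finalState$ and then emit $M(\vect{x})$ while keeping the overall output in the exact $\rGM$-then-$G$ shape required by $\augG$. This is precisely where the chosen layout $\augG \to \rGM G$ with $\rGM$ a regular language over $\hatConfig{\Gamma}$ pays off: $\rGM$ imposes no bound on the number of configuration tokens and no constraint on which string of $\lang{G}$ follows, so the grammar is simultaneously permissive enough never to block the simulation and restrictive enough to force the final answer into $\lang{G}$; the only real obligation is that the simulation faithfully tracks $M$, including halting detection, which Lemma~\ref{lem:expressivity} already provides, and that $M(\vect{x})$ is recoverable from the final configuration, which it is because the output tape is part of the configuration. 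A minor point still worth checking is that representing configurations as atomic vocabulary symbols rather than as substrings over $\tapealpha$ does not affect expressivity, which holds because $\hatConfig{\Gamma}$ is finite and can be folded into the (still finite) vocabulary without disturbing the circuit-size bookkeeping behind Lemma~\ref{lem:expressivity}.
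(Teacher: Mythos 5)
Your proposal is correct and follows essentially the same route as the paper: invoke the construction behind Lemma~\ref{lem:expressivity} so that the unconstrained output is $\hatConfig{\gamma_1}\cdots\hatConfig{\gamma_{\runtime{n}}}\cdot M(\vect{x})$, show this string lies in $\lang{\augG}$ via $\augG \to \rGM G$ with the configuration prefix in $\hatConfig{\Gamma}^{*} \subseteq L(\rGM)$ and $M(\vect{x}) \in \lang{G}$, and conclude by the soundness property that constrained decoding leaves the output unchanged. Your additional token-level justification of why the mask never blocks the unconstrained choice is a slightly more careful unpacking of that soundness property than the paper gives, but it is the same argument.
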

\textbf{Proof:} The proof is in Appendix~\ref{sec:proof}.
\section{\Tool Algorithm}
Given any Turing machine $M$, Proposition~\ref{thm:expressConstrain} establishes that constrained decoding with the augmented grammar $\augG$ on a specific LLM $ \llm_{M} $ can simulate the computation of $ M $. However, this result does not directly translate into a practical constrained decoding algorithm that preserves the expressivity of general LLMs. The construction assumes a specific LLM $ \llm_{M} $ with the vocabulary $V_{M}$ and knowledge of the particular Turing machine $M$ for defining the rules $\rGM$. In practice, we require an efficient approach that can be applied to diverse open-source LLMs, various grammars, and different constrained decoding algorithms. Importantly, we know that enforcing the output grammar $ G $ from the beginning can limit expressivity. Instead, we impose grammar constraints judiciously to avoid restricting the LLM's reasoning capabilities. For example, in the case of a reasoning-augmented output of the form $ \hatConfig{\gamma_{1}}\cdots \hatConfig{\gamma_{\runtime{n}}} \cdot M(\vect{x}) $, we apply constrained decoding only from the $ \runtime{n} + 1 $-th autoregressive step onward, ensuring that the reasoning process remains unrestricted while the final answer adheres to the desired grammar. 

The primary challenge here is deciding when to transition between an unconstrained generation for reasoning and a constrained generation. 
For instance, grammar for general-purpose programming languages such as Python can allow any text string at the start (e.g. program starting variable names) making it hard to detect the end of reasoning string. 
To avoid this, we augment the output grammar with specific delimiter symbols $S_1$ and $S_2$ that mark the start and end of the constrained generation. 
We incentivize the LLM to generate these delimiters via explicit instructions in the prompt and few-shot examples. This aligns with common general-purpose LLMs that already use specific delimiters such as backticks (\textcolor{red}{\texttt{```}}) for programs like python, SQL, and (\textcolor{red}{\texttt{<<}, \texttt{>>}}) to enclose math-expression blocks. This approach allows a simple and cost-efficient approach for detecting the transitions to and from constrained decoding. For the construction in the previous section, in this setup, we will generate the string $\vect{r}\cdot S_1\cdot M(\vect{x})\cdot S_2$ where the reasoning $\vect{r}$ is generated unconstrained and the LLM moves to constrained mode after seeing the symbol $S_1$. However, in practical cases, the delimiters may be generated multiple times (ie. for intermediate operations), even during the reasoning step. Therefore, upon encountering the end symbol $S_2$, we switch back to unconstrained generation to avoid unnecessarily restricting the output.


\begin{figure}[tbh]
\centering
\includegraphics[width=8cm]{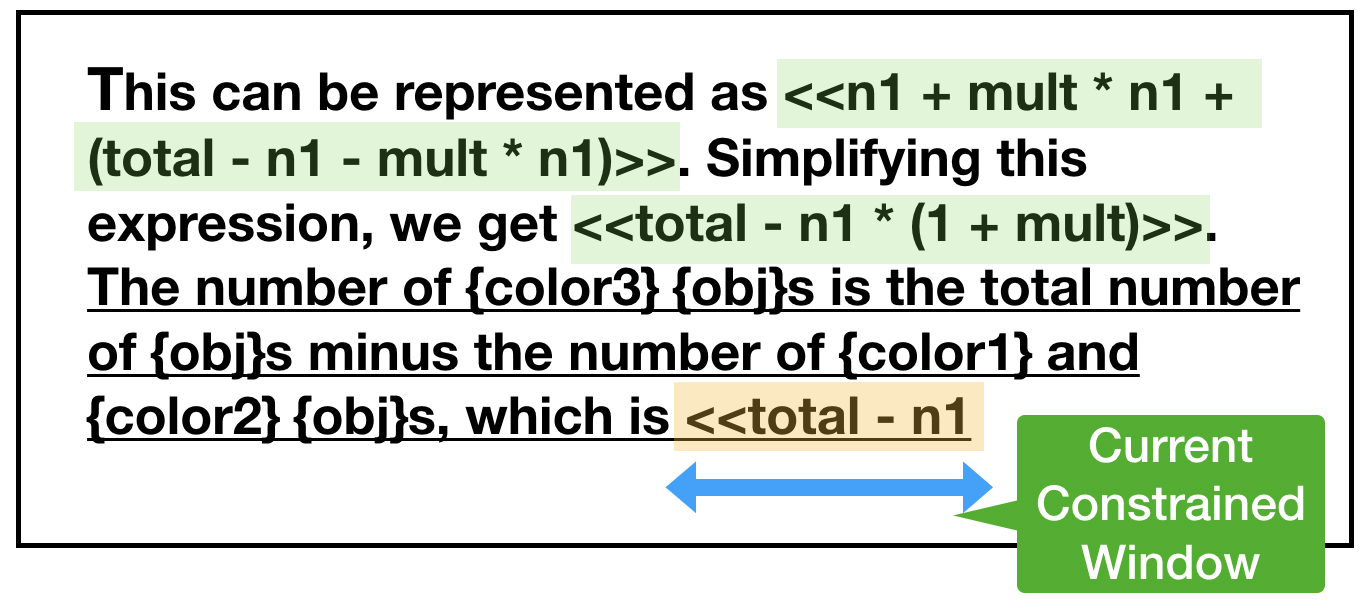}
\vspace{-.1in}
\caption{
\Tool{} adaptively switches between constrained LLM generation and unconstrained LLM generation based on start and end delimiters (in this example \textcolor{red}{\texttt{<<}} and  \textcolor{red}{\texttt{>>}}). Using these delimiters, \Tool{} dynamically tracks which windows (highlighted in the figure) of the LLM generation constraints should be applied to. 
} 
\label{fig:crane}
\vspace{-.1in}
\end{figure}

We implement our approach into the \Tool algorithm (Algo ~\ref{alg:crane}), which extends standard autoregressive LLM generation. 
\Tool takes an arbitrary LLM, constrained decoding algorithm (denoted as CSD), output grammar $G$, and symbols $S_1$ and $S_2$ as input. It first initializes CSD with $G'$, the output grammar augmented with $S_1$ and $S_2$. \Tool starts in unconstrained generation and maintains a pointer that marks the start of the current window of LLM generation following the last constrained generation. In each iteration, the algorithm checks if $S_1$ is present in the current generation window $\texttt{currGen}$, which is the portion of the sequence from the current pointer position onwards.  If $S_1$ is detected, \Tool switches to constrained generation mode. In this mode, the current constrained window (the portion of $\texttt{currGen}$ that is in $G'$) is extracted, and the next token is sampled based on the constraints defined by the CSD. If $S_1$ is not present, the next token is computed directly without any constraints applied. Additionally, if the current constrained window ends with $S_2$, the pointer is updated to the length of the current token sequence, effectively switching back to unconstrained generation until $S_1$ is generated again. Figure ~\ref{fig:crane} further illustrates LLM generation with \Tool{}. The underlined portion of the LLM generation represents $\texttt{currGen}$, and the current constrained window is highlighted in yellow.


\begin{algorithm}[t]
\caption{\Tool{} Algorithm}
\label{alg:crane}
\begin{algorithmic}[1]
\State \textbf{Input:} LLM, tokens, CSD (constrained decoder), G (output grammar), $S_1$ (start delimiter), $S_2$ (end delimiter)
\State \textbf{Output:} Output string

\State $G' \gets S_1 G S_2$
\State \textsc{csd.initialize}($G'$)
\State $\texttt{pointer} \gets \text{len}(\text{tokens})$
\State $\texttt{isConstrained} \gets \text{False}$

\While{$\text{True}$}
    \State $\text{currGen} \gets \text{detokenize}(\text{tokens}[\texttt{pointer}:])$
    
    \If{$\text{$S_1$} \in \texttt{currGen}$}
        \State $\texttt{isConstrained} \gets \text{True}$
    \Else
        \State $\texttt{isConstrained} \gets \text{False}$
    \EndIf
    
    \If{$\texttt{isConstrained}$}
        \State $\text{constrained} \gets \text{extractConstrained}(\texttt{currGen})$
        \State $t_i \sim \text{LLM}(\text{tokens}) \odot \text{CSD}(\text{constrained})$
    \Else
        \State $t_i \sim \text{LLM}(\text{tokens})$
    \EndIf
    
    \State $\text{tokens} \gets \text{tokens} + t_i$
    
    \If{$t_i = \text{EOS}$}
        \State \textbf{break}
    \EndIf
    
    \If{$\texttt{isConstrained}$}
        \State $\text{constrained} \gets \text{constrained} + \text{detokenize}(t_i)$
        \If {$\text{constrained}\text{.endswith}(S_2)$}
            \State $\texttt{pointer} \gets \text{len}(\text{tokens})$
        \EndIf
    \EndIf
\EndWhile

\State \Return $\text{detokenize}(\text{tokens})$
\end{algorithmic}
\end{algorithm}


\begin{table*}[h]
    \centering
    \caption{Comparison of \Tool{} and baselines with different models on GSM-Symbolic.}
    \begin{tabular}{llccr}
        \toprule
        \textbf{Model} & \textbf{Method} & \textbf{Acc. (\%)} & \textbf{Parse (\%)} &  \textbf{Tokens} \\
        
\midrule
     & \stdUnconstrained{} & 21 & 97 & 23.34\\
 & \stdConstrained{} & 22 & 97 & 25.29 \\
 Qwen2.5-1.5B-Instruct & \cotUnconstrained{} & 26 & 90 & 128.97\\
 & \textbf{\Tool{}} & \textbf{31} & 100 & 131.3\\

\midrule

    & \stdUnconstrained{} & 36 & 94 & 17.92 \\
 & \stdConstrained{} & 35 & 99 & 25.28  \\
 Qwen2.5-Coder-7B-Instruct & \cotUnconstrained{} & 37 & 88 & 138.38 \\
 & \textbf{\Tool{}} & \textbf{39} & 94 & 155.32\\

\midrule

     & \stdUnconstrained{} & 27 & 89 & 25.7 \\
 & \stdConstrained{} & 29 & 99 & 26.81  \\
 Qwen2.5-Math-7B-Instruct & \cotUnconstrained{} & 29 & 82 & 155.26\\
 & \textbf{\Tool{}} & \textbf{38} & 94 & 158.86 \\

 \midrule

     & \stdUnconstrained{} & 21 & 73 & 128.38\\
 & \stdConstrained{} & 26 & 98 & 35.97 \\
 Llama-3.1-8B-Instruct & \cotUnconstrained{} & 30 & 95 & 163.55 \\
 & \textbf{\Tool{}} & \textbf{33} & 95 & 170.22 \\
 \midrule
     & \stdUnconstrained{} & 18 & 89 & 21.64\\
 & \stdConstrained{} & 20 & 99 & 17.21 \\
DeepSeek-R1-Distill-Qwen-7B & \cotUnconstrained{} & 24 & 89 & 212.24 \\
  & \textbf{\Tool{}} & \textbf{29} & 92 & 235.78 \\

 \midrule
     & \stdUnconstrained{} & 12 & 77 & 29.2\\
 & \stdConstrained{} & 13 & 96 & 16.89 \\
DeepSeek-R1-Distill-Llama-8B & \cotUnconstrained{} & 21 & 87 & 250.83 \\
  & \textbf{\Tool{}} & \textbf{31} & 92 & 268.82 \\
   \midrule
     & \stdUnconstrained{} & 42 & 93 & 18.52\\
 & \stdConstrained{} & 42 & 96 & 25.62 \\
Qwen2.5-Coder-14B-Instruct & \cotUnconstrained{} & 42 & 95 & 157.71 \\
  & \textbf{\Tool{}} & \textbf{45} & 95 & 158.54 \\
   \midrule
     & \stdUnconstrained{} & 29 & 82 & 20.9\\
 & \stdConstrained{} & 30 & 91 & 30.48 \\
DeepSeek-R1-Distill-Qwen-14B & \cotUnconstrained{} & 32 & 87 & 233.42 \\
  & \textbf{\Tool{}} & \textbf{38} & 93 & 244.98 \\
   \midrule
     & \stdUnconstrained{} & 37 & 80 & 54.93\\
 & \stdConstrained{} & 38 & 91 & 34.2 \\
QwQ-32B & \cotUnconstrained{} & 43 & 87 & 222.62 \\
  & \textbf{\Tool{}} & \textbf{46} & 88 & 237.98 \\

\bottomrule
    \end{tabular}
    \label{tab:gsm_symbolic_comparison}
    \vspace{-.2in}
\end{table*}

\section{Evaluation}

In this section, we evaluate \Tool{} on a math reasoning task (GSM-Symbolic~\cite{mirzadeh2024gsmsymbolicunderstandinglimitationsmathematical}) and a logical reasoning task (FOLIO~\cite{han2024FOLIOnaturallanguagereasoning}) and demonstrate significant improvement over both unconstrained and SOTA constrained generation baselines. 

\noindent \textbf{Experimental Setup.}
We run experiments on a 48-core Intel Xeon Silver 4214R CPU with 2 NVidia RTX A5000 GPUs. 
\Tool{} is implemented using PyTorch~\cite{NEURIPS2019_9015} and the HuggingFace transformers library~\cite{wolf-etal-2020-transformers}. Our primary baseline for unconstrained generation is Chain-of-Thought (CoT) Prompting~\cite{cotGoogle}, which enables LLMs to decompose and reason about a problem through a series of intermediate steps before outputting the final answer. Furthermore, we run constrained semantic generation for GSM-Symbolic~\cite{mirzadeh2024gsmsymbolicunderstandinglimitationsmathematical} with the \itergen{} library~\cite{ugare2024itergeniterativestructuredllm} and use the \syncode{} framework for FOLIO~\cite{han2024FOLIOnaturallanguagereasoning} evaluation. In all experiments, \Tool{} is initialized with the same constrained decoders and uses the same constraints as the constrained generation baselines.

\textbf{GSM-Symbolic: }We first evaluate \Tool{} on GSM-Symbolic~\cite{mirzadeh2024gsmsymbolicunderstandinglimitationsmathematical}, a dataset consisting of math word problems designed to assess LLMs' mathematical reasoning skills. 
In the word problems, names and numerical values are replaced with symbolic variables, and the LLMs are tasked with generating correct symbolic expression solutions (see Appendix~\ref{sec:gsm_info} for examples). To evaluate correctness, we extract the final expressions from the LLM generations and verify if they are functionally equivalent to the ground truth expressions with the Z3 solver~\cite{z3}.

We compare \Tool{} against three baselines: (1) unconstrained generation without chain-of-thought prompting, (2) unconstrained generation with CoT, and (3) constrained generation. We use \itergen{} for the constrained generation baseline and also initialize \Tool{} with \itergen{}. For \itergen{} and \Tool{}, we enforce syntactic constraints via the context-free grammar provided in Appendix ~\ref{sec:gsm_grammar} and apply the semantic constraint ensuring that generated expressions contain only valid problem-defined variables. Since \itergen{} uses selective rejection sampling to enforce semantic constraints, we also include comparisong against unconstrained generation with sampling in Table ~\ref{tab:rejection_sample_gsm} in the Appendix. For \Tool{}, we use \textcolor{red}{\texttt{<<}} and  \textcolor{red}{\texttt{>>}} for the delimeters $S_1$ and $S_2$, respectively.
\ifthenelse{\boolean{icml}}
{We evaluate four LLMs for the experiment: Qwen2.5-1.5B-Instruct~\cite{qwen2.5}, Qwen2.5-Math-7B-Instruct~\cite{qwen2.5}, Qwen2.5-Coder-7B-Instruct~\cite{qwen2.5}, and Llama-3.1-8B-Instruct~\cite{llamamodels}. For all models, we use greedy decoding with a maximum new token limit of 600. Additionally, we prompt the LLMs with the 8-shot examples from GSM-Symbolic~\cite{mirzadeh2024gsmsymbolicunderstandinglimitationsmathematical} (the prompts can be found in Appendix ~\ref{sec:gsm_info}).}
{We evaluate Qwen2.5-1.5B-Instruct~\citep{qwen2.5}, Qwen2.5-Math-7B-Instruct~\citep{qwen2.5}, Qwen2.5-Coder-7B-Instruct~\citep{qwen2.5},Llama-3.1-8B-Instruct~\citep{llamamodels}, DeepSeek-R1-Distill-Qwen-7B~\citep{deepseekr1},  DeepSeek-R1-Distill-Llama-8B~\citep{deepseekr1}, Qwen2.5-Coder-14B-Instruct~\citep{qwen2.5}, DeepSeek-R1-Distill-Qwen-14B~\citep{deepseekr1}, and QwQ-32B~\citep{qwq32b}. We use greedy decoding with a maximum new token limit of 600 and prompt the LLMs with the 8-shot examples from GSM-Symbolic~\citep{mirzadeh2024gsmsymbolicunderstandinglimitationsmathematical} (the prompts can be found in Appendix ~\ref{sec:gsm_info}).}

Table~\ref{tab:gsm_symbolic_comparison} compares the performance of \Tool{} with the baseline methods. The Accuracy (\%) column reports the percentage of functionally correct LLM-generated expressions, Parse (\%) indicates the percentage of syntactically valid expressions (i.e., expressions without invalid operations), and Tokens provides the average number of tokens generated. 

\begin{figure}[t]
    \centering
    \ifthenelse{\boolean{icml}}
    {\includegraphics[width=0.45\textwidth]{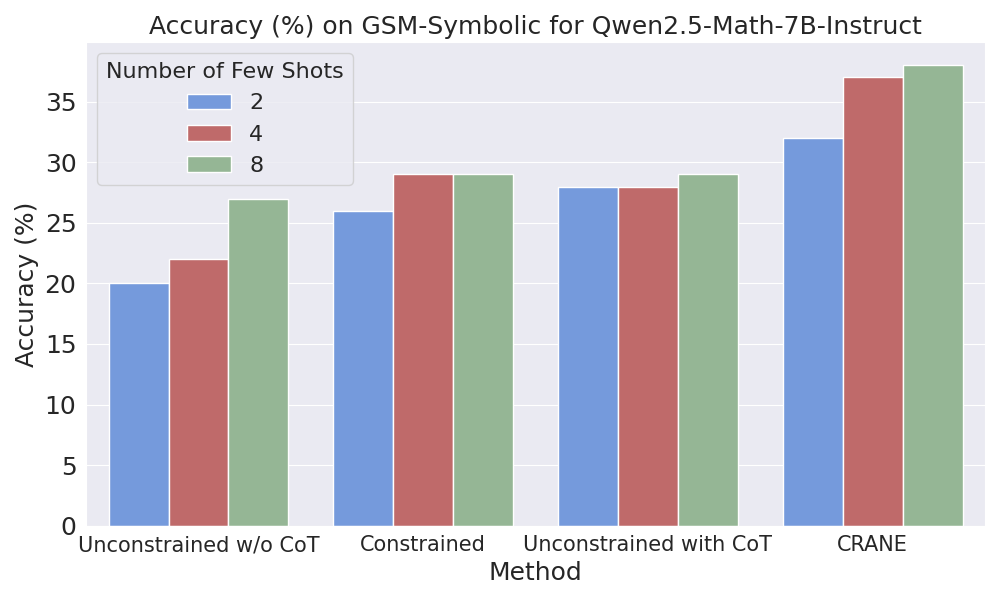}}
    {\includegraphics[width=0.45\textwidth]{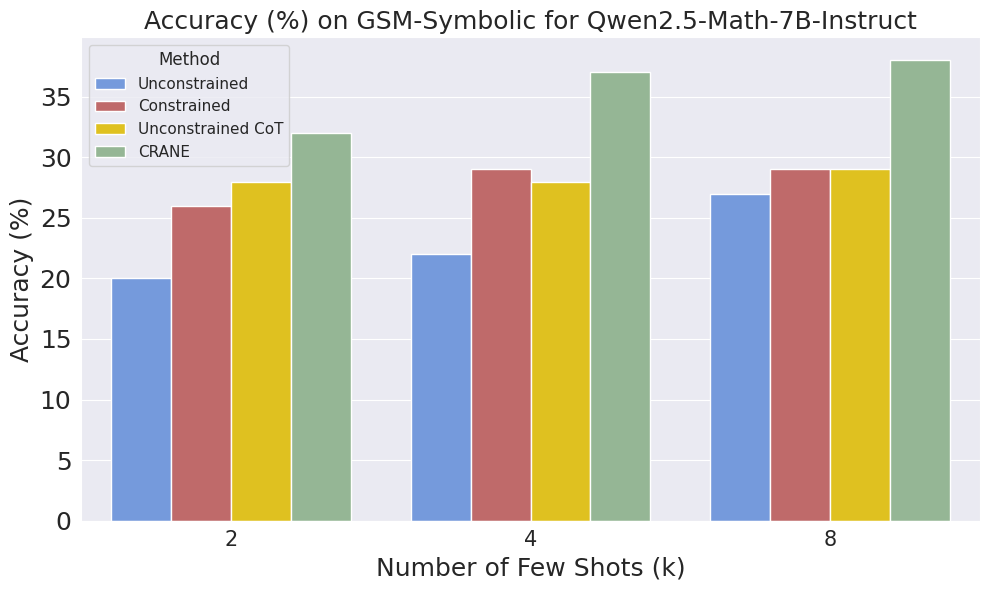}}
    \caption{Accuracy (\%) of Qwen2.5-Math-7B-Instruct By Method and Number of Shots on GSM-Symbolic}
    \label{fig:gsm_ks}
\end{figure}

As shown in the table, \Tool{} consistently improves functional correctness across all evaluated models. For example, with the Qwen2.5-Math-7B-Instruct model, \Tool{} achieves 38\% accuracy, outperforming both constrained generation and unconstrained generation with CoT, which achieves 29\% accuracy. Similarly, with the Qwen2.5-1.5B-Instruct model, \Tool{} achieves 31\% accuracy—5 percentage points higher than an unconstrained generation with CoT and 9 percentage points higher than a constrained generation. Moreover, \Tool{} significantly enhances the syntactic correctness of generated expressions compared to unconstrained generation. Notably, none of the expressions generated using \Tool{} contain syntax errors, whereas 10\% of the expressions from unconstrained generation with CoT do. Although, for several instances, \Tool{} produces slightly more syntax errors than a purely constrained generation, it offers a substantial improvement in functional correctness over this baseline.

\textbf{Ablation Study on Few-shot examples:}
We evaluate \Tool{} and baselines on varying numbers of few-shot examples in the prompt and display the results for Qwen2.5-Math-7B-Instruct in Figure~\ref{fig:gsm_ks}.
Results for all models are presented in Table~\ref{tab:gsm_symbolic_comparison_k_shot} in the Appendix. \Tool{} consistently achieves higher accuracy on GSM-Symbolic than the baselines for all evaluated numbers of few-shot examples.

\textbf{FOLIO: }We further evaluate \Tool{} on the validation split of FOLIO dataset, which comprises 203 expert-written natural language reasoning instances and corresponding first-order logic (FOL) annotations. 
We evaluate the ability of LLMs to correctly translate the natural language reasoning instances into FOL formulas and leverage Prover9~\cite{prover9-mace4} a FOL solver to verify the correctness of the LLM-generated FOL formulas. 

We compare \Tool{} against grammar-constrained generation with \syncode{} using the Prover9 grammar (Appendix ~\ref{gram:prover9_grammar}). 
The Prover9 grammar divides FOL formulas into Predicates, Premises, and Conclusions and allows intermediate reasoning in comments (an example can be found in Appendix ~\ref{gram:folio_example}). 
We also compare \Tool{} against unconstrained generation with CoT. 
For all approaches and models, we run greedy decoding with a maximum new tokens limit of 800 and use 2 few-shot examples in the prompt. We also compare \Tool{} against unconstrained CoT with temperature sampling in Table ~\ref{tab:rejection_sample_fol} in the Appendix. 

Table~\ref{tab:fol_comparison} presents the results of our experiment. The Accuracy (\%) column in the table reports the percentage of functionally correct FOL translations while the Compiles (\%) column reports the percentage of FOL formulas extracted from LLM output that are syntactically valid and compile into a Prover9 program. 
\Tool{} outperforms the unconstrained and constrained generation baselines for all models evaluated. 


\begin{table*}[tbh]
\centering
    \caption{Comparison of \Tool{} and baselines with various models on FOLIO.}
    \begin{tabular}{llccc}
        \toprule
        \textbf{Model} & \textbf{Method} & \textbf{Acc. (\%)}  & \textbf{Compiles (\%)} & \textbf{Tokens} \\
\midrule
     & \cotUnconstrained{} & 18.72 & 54.19 & 629.59 \\
 Qwen2.5-Math-7B-Instruct & \stdConstrained{} & 28.08 & 76.85 &  679.44 \\
 & \textbf{\Tool{}} & \textbf{31.03} & 75.86 & 690.17  \\
\midrule

     & \cotUnconstrained{} & 36.95 & 70.94 & 350.64  \\
 Qwen2.5-7B-Instruct & \stdConstrained{} & 37.44 & 87.68 & 775.62 \\
 & \textbf{\Tool{}} & \textbf{42.36} & 87.68 & 726.88  \\

 \midrule
     & \cotUnconstrained{} & 32.02 & 57.14 & 371.52  \\
 Llama-3.1-8B-Instruct & \stdConstrained{} & 39.41 & 86.21 & 549.75  \\
 & \textbf{\Tool{}} & \textbf{46.31} & 85.71 & 449.77  \\


\bottomrule
    \end{tabular}
    \label{tab:fol_comparison}
    \vspace{-.2in}
\end{table*}

\textbf{Limitation: }Our work has the following limitations.
First, Proposition~\ref{thm:compLimit} only demonstrates a reduction in expressivity when the language \( \lang{\regG} \) is finite. 
This leaves open the question of whether Proposition~\ref{thm:compLimit} can be extended to grammars \( G \) where \( L(G) \) is infinite. 
Second, \Tool\ for constrained decoding relies on existing tools \cite{syncode} that require access to output logits, rendering \Tool\ inapplicable to models that do not expose logits.
\vspace{-5pt}
\section{Related Works}
\textbf{Constrained LLM Decoding:}
Recent works have introduced techniques to enforce LLM generations to adhere to a context-free grammar using constrained decoding~\cite{ugare2024syncodellmgenerationgrammar, willard2023efficient, beurerkellner2024guiding, melcer2024constraineddecodingfillinthemiddlecode}.
Additionally, \citet{poesia2022synchromesh, ugare2024itergeniterativestructuredllm} have extended grammar-guided generation to incorporate task-specific semantic constraints. 
These approaches demonstrate that constrained decoding can improve the syntactic and semantic quality of LLM outputs for various structured generation tasks.

More recently, \citet{speakFree} demonstrated that constrained structured generation can negatively impact the quality of generated outputs. Similarly, \citet{park2024grammaraligneddecoding} showed that greedily masking out tokens that do not lead to a valid string during next-token prediction can distort the output distribution, causing it to deviate from the true distribution of all grammatically valid outputs of $\llm$ for a given input.
To mitigate the distortion introduced by the greedy masking approach, these ``grammar aligned" methods~\cite{park2024grammaraligneddecoding, melcer2024approximatelyaligneddecoding} use a trie to track previous generations, reducing generation divergence iteratively. 
However, they are computationally expensive and require a large no. of resamplings per prompt to converge.

In contrast, our work focuses on the fundamental question of the theoretical expressivity of any constant layered constrained LLM, even under an ideal constrained decoding algorithm, and uses the insights to propose a practical solution.
We propose an adaptive constrained decoding approach that can support various constrained decoding methods, including grammar-aligned techniques while preserving the LLM's expressivity by reasoning chains.

\textbf{LLM Expressivity:} \cite{survey} provides a detailed survey of existing results from the perspective of formal language theory and complexity classes. A series of existing works \cite{circuit1, cicuit2, circuit3, tc0} establish that, under suitable assumptions, a single autoregressive step on an input of any length for a constant-depth LLM can be represented as a constant-depth Boolean circuit. \cite{expressivity1, expressivity2} show that the expressivity of LLMs significantly improves under popular reasoning approaches like Chain of Thought (CoT) \cite{cotGoogle}, where LLMs take intermediate steps before generating the final answer. To the best of our knowledge, there is no prior work on LLM expressivity under grammar constraints.

\vspace{-5pt}
\section{Conclusion}
In conclusion, tasks requiring both syntactic and semantic correctness, such as code generation and symbolic math reasoning, benefit significantly from constrained decoding strategies. However, strict enforcement of constraints can hinder LLM reasoning capabilities. Theoretically, we demonstrate why restrictive grammars diminish reasoning and show that augmenting grammars with carefully designed rules preserves reasoning while maintaining correctness. Building on these insights, our proposed reasoning-augmented constrained decoding algorithm, \Tool{}, achieves state-of-the-art performance, with up to \upto{} improvement on symbolic reasoning benchmarks such as GSM-symbolic and FOLIO, effectively balancing the strengths of constrained and unconstrained generation.
\clearpage
\newpage
\section{Impact and Ethics}
\label{sec:impactStatement}
This paper introduces research aimed at advancing the field of Machine Learning. We do not identify any specific societal consequences of our work that need to be explicitly emphasized here.
\bibliography{ref}
\bibliographystyle{icml2025}
\appendix
\onecolumn
\section{Turing Machine Computation}
\label{appen:turningDetails}
A Turing machine processes an input string $\vect{x} \in \inset$. Its configuration consists of a finite state set $Q$, an input tape $c_0$, $k$ work tapes $c_1, \dots, c_k$, and an output tape $c_{k+1}$. Additionally, each tape $\tau$ has an associated head position $h_\tau$.  

Initially, the machine starts in the initial state $q_0 \in Q$ with the input tape $c_0^0$ containing $\vect{x}$, positioned at index $0$, and surrounded by infinite blank symbols ($b$). The head on the input tape is set to $h_0^0 = 0$, while all other tapes contain only blank symbols $b$s and have their heads positioned at $0$.  

At each time step $i$, if $q_i \notin F$ ($F$ is a set of halting states), the configuration updates recursively by computing:  
\[
\langle q_{i+1}, \gamma_1^i, \dots, \gamma_{k+1}^i, d_0^i, \dots, d_{k+1}^i \rangle = \delta(q_i, c_0^i[h_0^i], \dots, c_{k+1}^i[h_{k+1}^i])
\]  
where $\delta$ is the transition function. The machine updates each tape $\tau$ by setting $c_\tau^{i+1}[h_\tau^i] = \gamma_\tau^i$, leaving all other tape cells unchanged. The head position for each tape is updated as $h_\tau^{i+1} = h_\tau^i + d_\tau^i$.  If $q_i \in F$, the Turing machine halts and outputs the sequence of tokens on the output tape, starting from the current head position and continuing up to (but not including) the first blank symbol ($b$). A Turing machine can also function as a language recognizer by setting the input alphabet $\Sigma = \{0,1\}$ and interpreting the first output token as either $0$ or $1$.

\section{Thershold Circuit Class}
\label{appen:threshInfo}
$\class$ is a class of computational problems that can be recognized by constant-depth, polynomial-size circuits composed of threshold gates. A threshold gate, such as $\theta_{\leq k}$, outputs 1 if the sum of its input bits is at most $ k $, while $\theta_{\geq k}$ outputs 1 if the sum is at least $ k $. These circuits also include standard logic gates like $\wedge$, $\vee$, and $\neg$ as special cases of threshold functions. Since $\class$ circuits can simulate $AC^{0}$ circuits ( a polysize, constant-depth $\{\wedge, \vee, \neg\}$-circuit family), they are at least as powerful as $AC^{0}$ in the computational hierarchy. The circuit families we have defined above are non-uniform, meaning that there is no requirement for the circuits processing different input sizes to be related in any way. In degenerate cases, non-uniform circuit families can solve undecidable problems making them an unrealizable model of computation \cite{complexityBook}. Intuitively, a uniform circuit family requires that the circuits for different input sizes must be "somewhat similar" to each other. This concept is formalized by stating that there exists a resource-constrained Turing machine that, given the input $ 1^n $, can generate a serialization of the corresponding circuit $ C_n $ for that input size. Specifically, a logspace uniform $\class$ family can be constructed by a logspace-bounded Turing machine from the string $1^n$.

\section{Proofs}
\label{sec:proof}
\begin{lemma}[Constant depth circuit for $\llmFormal$]
\label{lem:constrainLem}
For any log-precision constant layer transformer-based LLM $\llm{}$ with finite vocabulary $V$, a single deterministic auto-regressive step $\llmFormal(x)$ operating on any input of size $n \in \mathbb{N}$ with $\vect{x} \in V^{n}$ can be simulated by a logspace-uniform threshold circuit family of depth $C$ where $C$ is constant.
\end{lemma}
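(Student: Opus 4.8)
The plan is to reduce the claim about a single autoregressive step $\llmFormal$ to an already-established result in the literature, namely the $TC^0$ simulation theorem for log-precision constant-depth transformers from \cite{tc0}. The statement of Lemma~\ref{lem:constrainLem} is essentially a restatement of that theorem adapted to our notation, so the proof is mostly a matter of carefully matching definitions: our $\llmFormal : \insetLLM \to \inalphaLLM$ is precisely the next-token map computed by one forward pass of the transformer, and ``log-precision'' means that when the input has length $n$, all numerical values (activations, attention scores, softmax outputs) are represented in floating point with $O(\log n)$ bits. Under these assumptions, \cite{tc0} shows that the function mapping the length-$n$ input to the transformer's output logits — and hence, after an $\argmax$ over the finite vocabulary $V$, the predicted token — is computable by a logspace-uniform family of constant-depth, polynomial-size threshold circuits.

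The key steps I would carry out, in order, are: (i) fix a constant-layer transformer architecture and observe that it is a fixed composition of a constant number of sublayers (embedding lookup, self-attention, feed-forward, layer normalization, and the final output projection), the constant depending only on the architecture and not on $n$; (ii) recall that each such sublayer, when restricted to log-precision arithmetic, implements functions (iterated addition, iterated multiplication, division, comparison, softmax via exponentials approximated in $O(\log n)$ bits, maximum) that are all known to lie in logspace-uniform $TC^0$ — this is the technical heart imported from \cite{tc0} and the threshold-circuit toolkit; (iii) compose these constant-depth subcircuits; since $TC^0$ is closed under composition with a constant number of layers, the overall circuit still has constant depth $C$ and polynomial size; (iv) append a final constant-depth threshold subcircuit computing the $\argmax$ over the $|V|$ output scores to extract the single predicted token $\llmFormal(\vect{x}) \in V$, encoded in binary using $\lceil \log |V| \rceil$ bits; (v) verify the logspace-uniformity condition, i.e., that the description of the circuit $Th_n$ for input size $n$ is produced by a logspace Turing machine on input $1^n$ — this again follows from the uniformity argument in \cite{tc0}, since the architecture is fixed and only the ``width'' of the arithmetic subcircuits (governed by the $O(\log n)$ precision and the $n$ input positions) varies with $n$ in a logspace-computable way.

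I expect the main obstacle to be purely expository rather than mathematical: stating precisely which variant of the transformer model and which notion of log-precision is being assumed so that the cited result applies verbatim, and making sure the softmax / attention normalization step is handled correctly under bounded precision (this is the step where naive arguments break, since exact softmax is not obviously in $TC^0$; the resolution in \cite{tc0} is that $O(\log n)$-bit approximations suffice and are themselves $TC^0$-computable). Since the excerpt explicitly adopts the deterministic, log-precision setting of \cite{tc0} and only needs the result as a black box for Proposition~\ref{thm:compLimit}, I would keep the proof short: cite the simulation theorem, note that one autoregressive step of $\llm$ is exactly one application of $\llmFormal$, and conclude that it is computable by a logspace-uniform constant-depth threshold circuit family, with the constant depth $C$ inherited from the fixed number of transformer layers.
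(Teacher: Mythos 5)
Your proposal is correct and takes essentially the same approach as the paper: the paper's entire proof of this lemma is a one-line citation of Theorem~2 in \cite{tc0}, and your argument simply unpacks in more detail (sublayer decomposition, closure under constant-depth composition, the final $\argmax$, and the uniformity check) why that citation suffices.
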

\begin{proof}
The construction is from Theorem~2 in \cite{tc0}.
\end{proof}

\compLimitLemma*
\begin{proof}
The language $ L(\regG) $ is finite; therefore, for any string $ \vect{s} \in L(\regG) $, the length satisfies $ |\vect{s}| \leq N $, where $ N $ is a constant. Consequently, for any input $ \vect{x} $, the output $ \vect{y}_G = \llmG{\vect{x}}{G} $ has a constant length, i.e., $ |\vect{y}_{G}| \leq N $. The number of autoregressive steps is also bounded by $ N $.  

From Lemma~\ref{lem:constrainLem}, each unconstrained autoregressive computation $ \llmFormal(\vect{x}) $ can be simulated by a constant-depth threshold circuit $ C $. This implies that $ \llm_f(\vect{x}, \regG) $ can also be simulated by a constant-depth threshold circuit since it only involves an additional multiplication by a constant-sized precomputed Boolean mask $ \{0, 1\}^{|V|} $ (see Section~\ref{sec:prelims}).  

Given that the number of autoregressive steps is a constant $ N $, and each step can be simulated by a constant-depth circuit $ C $, we can simulate all $ N $ steps using a depth $ N \times C $ circuit by stacking the circuits for each step sequentially. For uniformity, we are just stacking together a constant number of constant depth circuits we can do it in a log-space bounded Turning machine $M$. 

Note that this proof holds only because $ L(\regG) $ allows only constant-size strings in the output.  
\end{proof}
\expressivity*
\begin{proof}
The construction follows from Theorem 2 \cite{expressivity1}.
\end{proof}

In this construction, the deterministic Turing machine run captured by a sequence of $\hatConfig{\gamma_1}, \dots, \hatConfig{\gamma_{t(n)}}$ capturing the state entered, tokens written, and directions moved after each token before generating the output $M(\vect{x})$. Then on any input the $\vect{x}$ the output $\llm_{M}(\vect{x}) = \hatConfig{\gamma_1}, \cdots, \hatConfig{\gamma_{t(n)}}\cdot M(\vect{x})$ (assuming $M$ halts within on $\vect{x}$ within $\runtime{n}$ steps where $n = |\vect{x}|$ and $\runtime{n}$ is a polynomial over $n$).

\expressConstrain*
\begin{proof}
$\llm{}_{M}(\vect{x})) = \hatConfig{\gamma_1}\cdots\hatConfig{\gamma_{\runtime{n}}}\cdot M(\vect{x})$. We show that $\llm{}_{M}(\vect{x}) \in L(\augG)$. 
$\augG \to \rGM G$. Since, $G$ is output grammar of $M$ then $M(\vect{x}) \in \lang{G}$. For all $1 \leq i \leq \runtime{n}$ $\hatConfig{\gamma_{i}} \in \hatConfig{\Gamma}$. Then, $\hatConfig{\gamma_1}\cdots\hatConfig{\gamma_{\runtime{n}}} \in \hatConfig{\Gamma}^{*} \subseteq L(\rGM)$.

Then $\llm{}_{M}(\vect{x}) \in \lang{\augG}$ then under constrained decoding the output $\llm{}_{M}(\vect{x})$ remains unchanged and $\llm{}_{M}(\vect{x}) = \llmMG{\vect{x}}{\augG} = \vect{r} \cdot M(\vect{x})$ where $\vect{r} = \hatConfig{\gamma_1}\cdots\hatConfig{\gamma_{\runtime{n}}}$.
\end{proof}
\clearpage
\newpage

\begin{table*}[t]
    \centering
    \caption{Comparison of \Tool{} and baselines with various models on GSM-Symbolic based on accuracy, number of tokens, and average time.}
    \begin{tabular}{llcccr}
        \toprule
        \textbf{Model} & \textbf{k} & \textbf{Method} & \textbf{Acc. (\%)} & \textbf{Parse (\%)} &  \textbf{Tokens}\\
        
\midrule

   &  & \stdUnconstrained{} & 20 & 98 & 18.23 \\
 & & \stdConstrained{} & 21 & 95 & 34.28 \\
 Qwen2.5-1.5B-Instruct & 2 & \cotUnconstrained{} & 22 & 90 & 130.74 \\
 & & \textbf{\Tool{}} & \textbf{28} & 96 & 140.52 \\

\midrule

   &  & \stdUnconstrained{} & 18 & 95 & 18.23 \\
 & & \stdConstrained{} & 18 & 96 & 34.28 \\
 Qwen2.5-1.5B-Instruct & 4 & \cotUnconstrained{} & 24 & 94 & 130.74 \\
 & & \textbf{\Tool{}} & \textbf{30} & 98 & 140.52 \\

\midrule
   &  & \stdUnconstrained{} & 21 & 97 & 23.34  \\
 & & \stdConstrained{} & 22 & 97 & 25.29  \\
  Qwen2.5-1.5B-Instruct  & 8 & \cotUnconstrained{} & 26 & 90 & 128.97  \\
 & & \textbf{\Tool{}} & \textbf{31} & 100 & 131.3  \\

\midrule

    & & \stdUnconstrained{} & 37 & 96 & 17.22 \\
 & & \stdConstrained{} & 36 & 99 & 18.61  \\
 Qwen2.5-Coder-7B-Instruct & 2 & \cotUnconstrained{} & 32 & 84 & 148.87 \\
 & & \textbf{\Tool{}} & \textbf{37} & 96 & 155.65\\

\midrule

    & & \stdUnconstrained{} & 36 & 96 & 16.89 \\
 & & \stdConstrained{} & 36 & 100 & 18.81  \\
 Qwen2.5-Coder-7B-Instruct & 4 & \cotUnconstrained{} & 35 & 89 & 151.29 \\
 & & \textbf{\Tool{}} & \textbf{37} & 97 & 163.21\\

\midrule

    & & \stdUnconstrained{} & 36 & 94 & 17.92 \\
 & & \stdConstrained{} & 35 & 99 & 25.28  \\
 Qwen2.5-Coder-7B-Instruct & 8 & \cotUnconstrained{} & 37 & 88 & 138.38 \\
 & & \textbf{\Tool{}} & \textbf{39} & 94 & 155.32\\

\midrule

     & & \stdUnconstrained{} & 20 & 66 & 115.22 \\
&  & \stdConstrained{} & 26 & 95 & 26.99 \\
 Qwen2.5-Math-7B-Instruct & 2 & \cotUnconstrained{} & 28 & 72 & 190.51 \\
 & & \textbf{\Tool{}} & \textbf{32} & 89 & 195.65 \\

 \midrule

    &  & \stdUnconstrained{} & 22 & 83 & 47 \\
 & & \stdConstrained{} & 29 & 98 & 27.08 \\
 Qwen2.5-Math-7B-Instruct & 4 & \cotUnconstrained{} & 28 & 76 & 184.35 \\
 & & \textbf{\Tool{}} & \textbf{37} & 88 & 194.77  \\

 \midrule

     & & \stdUnconstrained{} & 27 & 89 & 25.7 \\
 & & \stdConstrained{} & 29 & 99 & 26.81 \\
 Qwen2.5-Math-7B-Instruct & 8 &  \cotUnconstrained{} & 29 & 82 & 155.26\\
 & & \textbf{\Tool{}} & \textbf{38} & 94 & 158.86 \\

 \midrule

     & & \stdUnconstrained{} & 19 & 61 & 157.36 \\
 & & \stdConstrained{} & 23 & 95 & 45.58  \\
 Llama-3.1-8B-Instruct & 2 & \cotUnconstrained{} & 29 & 84 & 198.64 \\
 & & \textbf{\Tool{}} & \textbf{35} & 94 & 206.85 \\

 \midrule
     & & \stdUnconstrained{} & 18 & 68 & 131.5 \\
 & & \stdConstrained{} & 24 & 96 & 37.38  \\
 Llama-3.1-8B-Instruct & 4 & \cotUnconstrained{} & 26 & 92 & 172.21 \\
 & & \textbf{\Tool{}} & \textbf{30} & 97 & 179.95 \\

  \midrule

     & & \stdUnconstrained{} & 21 & 73 & 128.38 \\
 & & \stdConstrained{} & 26 & 98 & 35.97  \\
 Llama-3.1-8B-Instruct & 8 & \cotUnconstrained{} & 30 & 95 & 163.55 \\
 & & \textbf{\Tool{}} & \textbf{33} & 95 & 170.22 \\

\bottomrule
    \end{tabular}
    \label{tab:gsm_symbolic_comparison_k_shot}
    \vspace{-.2in}
\end{table*}

\subsection{GSM-Symbolic Examples and Prompt}
\label{sec:gsm_info}
\textbf{GSM-Symbolic Problem Solution Examples:}


\begin{lstlisting}[style=myGrammarStyle, caption=Problem Solution Examples for GSM-Symbolic]
Question: A fog bank rolls in from the ocean to cover a city. It takes {t} minutes to cover every {d} miles of the city. If the city is {y} miles across from oceanfront to the opposite inland edge, how many minutes will it take for the fog bank to cover the whole city?

Answer: y//d*t

Question: {name} makes {drink} using teaspoons of sugar and cups of water in the ratio of {m}:{n}. If she used a total of {x} teaspoons of sugar and cups of water, calculate the number of teaspoonfuls of sugar she used.

Answer: ((m*x)//(m+n))
\end{lstlisting}
\label{gram:gsm_example}
\textbf{GSM-Symbolic Prompt:}


\begin{lstlisting}[style=myGrammarStyle, caption=CoT Prompt Template For GSM-Symbolic Evaluation]
You are an expert in solving grade school math tasks. You will be presented with a grade-school math word problem with symbolic variables and be asked to solve it.

Before answering you should reason about the problem (using the <reasoning> field in the response described below). Intermediate symbolic expressions generated during reasoning should be wrapped in << >>.

Then, output the symbolic expression wrapped in << >> that answers the question. The expressions must use numbers as well as the variables defined in the question. You are only allowed to use the following operations: +, -, /, //, %, (), and int().

You will always respond in the format described below: 
Let's think step by step. <reasoning> The final answer is <<symbolic expression>>

There are {t} trees in the {g}. {g} workers will plant trees in the {g} today. After they are done, there will be {tf} trees. How many trees did the {g} workers plant today?

Let's think step by step. Initially, there are {t} trees. After planting, there are {tf} trees. The number of trees planted is <<tf - t>>. The final answer is <<tf - t>>.

If there are {c} cars in the parking lot and {nc} more cars arrive, how many cars are in the parking lot?

Let's think step by step. Initially, there are {c} cars. {nc} more cars arrive, so the total becomes <<c + nc>>. The final answer is <<c + nc>>.

{p1} had {ch1} {o1} and {p2} had {ch2} {o1}. If they ate {a} {o1}, how many pieces do they have left in total?

Let's think step by step. Initially, {p1} had {ch1} {o1}, and {p2} had {ch2} {o1}, making a total of <<ch1 + ch2>>. After eating {a} {o1}, the remaining total is <<ch1 + ch2 - a>>. The final answer is <<ch1 + ch2 - a>>.

{p1} had {l1} {o1}. {p1} gave {g} {o1} to {p2}. How many {o1} does {p1} have left?

Let's think step by step. {p1} started with {l1} {o1}. After giving {g} {o1} to {p2}, {p1} has <<l1 - g>> {o1} left. The final answer is <<l1 - g>>.

{p1} has {t} {o1}. For Christmas, {p1} got {tm} {o1} from {p2} and {td} {o1} from {p3}. How many {o1} does {p1} have now?

Let's think step by step. {p1} started with {t} {o1}. {p1} received {tm} {o1} from {p2} and {td} {o1} from {p3}. The total is <<t + tm + td>>. The final answer is <<t + tm + td>>.

There were {c} {o1} in the server room. {nc} more {o1} were installed each day, from {d1} to {d2}. How many {o1} are now in the server room?

Let's think step by step. Initially, there were {c} {o1}. {nc} {o1} were added each day for <<d2 - d1 + 1>> days, which is <<nc * (d2 - d1 + 1)>>. The total is <<c + nc * (d2 - d1 + 1)>>. The final answer is <<c + nc * (d2 - d1 + 1)>>.

{p1} had {gb1} {o1}. On {day1}, {p1} lost {l1} {o1}. On {day2}, {p1} lost {l2} more. How many {o1} does {p1} have at the end of {day2}?

Let's think step by step. Initially, {p1} had {gb1} {o1}. After losing {l1} {o1} on {day1}, {p1} had <<gb1 - l1>>. After losing {l2} {o1} on {day2}, the total is <<gb1 - l1 - l2>>. The final answer is <<gb1 - l1 - l2>>.

{p1} has ${m}. {p1} bought {q} {o1} for ${p} each. How much money does {p1} have left?

Let's think step by step. Initially, {p1} had ${m}. {p1} spent <<q * p>> on {q} {o1}. The remaining money is <<m - q * p>>. The final answer is <<m - q * p>>.

{question}
\end{lstlisting}
\label{gram:gsm_prompt}


\begin{lstlisting}[style=myGrammarStyle, caption= Prompt Template For GSM-Symbolic Evaluation Without CoT]
You are an expert in solving grade school math tasks. You will be presented with a grade-school math word problem with symbolic variables and be asked to solve it.

Only output the symbolic expression wrapped in << >> that answers the question. The expression must use numbers as well as the variables defined in the question. You are only allowed to use the following operations: +, -, /, //, %, (), and int().

You will always respond in the format described below: 
<<symbolic expression>>

There are {t} trees in the {g}. {g} workers will plant trees in the {g} today. After they are done, there will be {tf} trees. How many trees did the {g} workers plant today?

<<tf - t>>

If there are {c} cars in the parking lot and {nc} more cars arrive, how many cars are in the parking lot?

<<c + nc>>

{p1} had {ch1} {o1} and {p2} had {ch2} {o1}. If they ate {a} {o1}, how many pieces do they have left in total?

<<ch1 + ch2 - a>>

{p1} had {l1} {o1}. {p1} gave {g} {o1} to {p2}. How many {o1} does {p1} have left?

<<l1 - g>>

{p1} has {t} {o1}. For Christmas, {p1} got {tm} {o1} from {p2} and {td} {o1} from {p3}. How many {o1} does {p1} have now?

<<t + tm + td>>

There were {c} {o1} in the {loc}. {nc} more {o1} were installed each day, from {d1} to {d2}. How many {o1} are now in the {loc}?

<<c + nc * (d2 - d1 + 1)>>

{p1} had {gb1} {o1}. On {day1}, {p1} lost {l1} {o1}. On {day2}, {p1} lost {l2} more. How many {o1} does {p1} have at the end of {day2}?

<<gb1 - l1 - l2>>

{p1} has ${m}. {p1} bought {q} {o1} for ${p} each. How much money does {p1} have left?

<<m - q * p>>

{question}
\end{lstlisting}
\label{gram:gsm_prompt_no_cot}

\subsection{FOLIO Examples and Prompt}
\label{sec:folio_info}
\textbf{FOLIO Problem Solution Examples:}

\begin{lstlisting}[style=myGrammarStyle, caption=Problem Solution Examples for FOLIO]
Question: 
People in this club who perform in school talent shows often attend and are very engaged with school events.
People in this club either perform in school talent shows often or are inactive and disinterested community members.
People in this club who chaperone high school dances are not students who attend the school.
All people in this club who are inactive and disinterested members of their community chaperone high school dances.
All young children and teenagers in this club who wish to further their academic careers and educational opportunities are students who attend the school. 
Bonnie is in this club and she either both attends and is very engaged with school events and is a student who attends the school or is not someone who both attends and is very engaged with school events and is not a student who attends the school.
Based on the above information, is the following statement true, false, or uncertain? Bonnie performs in school talent shows often.
###

FOL Solution: 
Predicates:
InClub(x) ::: x is a member of the club.
Perform(x) ::: x performs in school talent shows.
Attend(x) ::: x attends school events.
Engaged(x) ::: x is very engaged with school events.
Inactive(x) ::: x is an inactive and disinterested community member.
Chaperone(x) ::: x chaperones high school dances.
Student(x) ::: x is a student who attends the school.
Wish(x) ::: x wishes to further their academic careers and educational opportunities.
Premises:
{forall} x (InClub(x) {and} Attend(x) {and} Engaged(x) {implies} Attend(x)) ::: People in this club who perform in school talent shows often attend and are very engaged with school events.
{forall} x (InClub(x) {implies} (Perform(x) {xor} Inactive(x))) ::: People in this club either perform in school talent shows often or are inactive and disinterested community members.
{forall} x (InClub(x) {and} Chaperone(x) {implies} {not}Student(x)) ::: People in this club who chaperone high school dances are not students who attend the school.
{forall} x (InClub(x) {and} Inactive(x) {implies} Chaperone(x)) ::: All people in this club who are inactive and disinterested members of their community chaperone high school dances.
{forall} x (InClub(x) {and} (Young(x) {or} Teenager(x)) {and} Wish(x) {implies} Student(x)) ::: All young children and teenagers in this club who wish to further their academic careers and educational opportunities are students who attend the school.
{forall} x (InClub(x) {implies} (Attend(x) {and} Engaged(x)) {xor} {not}(Attend(x) {and} Engaged(x)) {and} {not}Student(x) {xor} Student(x)) ::: Bonnie is in this club and she either both attends and is very engaged with school events and is a student who attends the school or is not someone who both attends and is very engaged with school events and is not a student who attends the school.
Conclusion:
InClub(bonnie) {and} Perform(bonnie) ::: Bonnie performs in school talent shows often.

Answer: Uncertain

\end{lstlisting}
\label{gram:folio_example}
\textbf{FOLIO Prompt:}

\begin{lstlisting}[style=myGrammarStyle, caption=Prompt Template Used For FOLIO Evaluation]
Given a problem description and a question. The task is to parse the problem and the question into first-order logic formulas.
The grammar of the first-order logic formula is defined as follows:
1) logical conjunction of expr1 and expr2: expr1 {and} expr2
2) logical disjunction of expr1 and expr2: expr1 {or} expr2
3) logical exclusive disjunction of expr1 and expr2: expr1 {xor} expr2
4) logical negation of expr1: {not}expr1
5) expr1 implies expr2: expr1 {implies} expr2
6) expr1 if and only if expr2: expr1 {iff} expr2
7) logical universal quantification: {forall} x
8) logical existential quantification: {exists} x. These are the ONLY operations in the grammar.
------

Answer the question EXACTLY like the examples.

Problem:
All people who regularly drink coffee are dependent on caffeine. People either regularly drink coffee or joke about being addicted to caffeine. No one who jokes about being addicted to caffeine is unaware that caffeine is a drug. Rina is either a student and unaware that caffeine is a drug, or neither a student nor unaware that caffeine is a drug. If Rina is not a person dependent on caffeine and a student, then Rina is either a person dependent on caffeine and a student, or neither a person dependent on caffeine nor a student.
Question:
Based on the above information, is the following statement true, false, or uncertain? Rina is either a person who jokes about being addicted to caffeine or is unaware that caffeine is a drug.
###

We take three steps: first, we define the necessary predicates and premises, and finally, we encode the question `Rina is either a person who jokes about being addicted to caffeine or is unaware that caffeine is a drug.` in the conclusion. Now, we will write only the logic program, nothing else.
Predicates:
Dependent(x) ::: x is a person dependent on caffeine.
Drinks(x) ::: x regularly drinks coffee.
Jokes(x) ::: x jokes about being addicted to caffeine.
Unaware(x) ::: x is unaware that caffeine is a drug.
Student(x) ::: x is a student.
Premises:
{forall} x (Drinks(x) {implies} Dependent(x)) ::: All people who regularly drink coffee are dependent on caffeine.
{forall} x (Drinks(x) {xor} Jokes(x)) ::: People either regularly drink coffee or joke about being addicted to caffeine.
{forall} x (Jokes(x) {implies} {not}Unaware(x)) ::: No one who jokes about being addicted to caffeine is unaware that caffeine is a drug. 
(Student(rina) {and} Unaware(rina)) {xor} {not}(Student(rina) {or} Unaware(rina)) ::: Rina is either a student and unaware that caffeine is a drug, or neither a student nor unaware that caffeine is a drug.
Conclusion:
Jokes(rina) {xor} Unaware(rina) ::: Rina is either a person who jokes about being addicted to caffeine or is unaware that caffeine is a drug.
------

Problem:
Miroslav Venhoda was a Czech choral conductor who specialized in the performance of Renaissance and Baroque music. Any choral conductor is a musician. Some musicians love music. Miroslav Venhoda published a book in 1946 called Method of Studying Gregorian Chant.
Question:
Based on the above information, is the following statement true, false, or uncertain? Miroslav Venhoda loved music.
###

We take three steps: first, we define the necessary predicates and premises, and finally, we encode the question `Miroslav Venhoda loved music.` in the conclusion. Now, we will write only the logic program, nothing else.
Predicates:
Czech(x) ::: x is a Czech person.
ChoralConductor(x) ::: x is a choral conductor.
Musician(x) ::: x is a musician.
Love(x, y) ::: x loves y.
Author(x, y) ::: x is the author of y.
Book(x) ::: x is a book.
Publish(x, y) ::: x is published in year y.
Specialize(x, y) ::: x specializes in y.
Premises:
Czech(miroslav) {and} ChoralConductor(miroslav) {and} Specialize(miroslav, renaissance) {and} Specialize(miroslav, baroque) ::: Miroslav Venhoda was a Czech choral conductor who specialized in the performance of Renaissance and Baroque music.
{forall} x (ChoralConductor(x) {implies} Musician(x)) ::: Any choral conductor is a musician.
{exists} x (Musician(x) {and} Love(x, music)) ::: Some musicians love music.
Book(methodOfStudyingGregorianChant) {and} Author(miroslav, methodOfStudyingGregorianChant) {and} Publish(methodOfStudyingGregorianChant, year1946) ::: Miroslav Venhoda published a book in 1946 called Method of Studying Gregorian Chant.
Conclusion:
Love(miroslav, music) ::: Miroslav Venhoda loved music.
------

{question}
\end{lstlisting}
\label{gram:folio_prompt}

\newpage
\subsection{Case Study For GSM-Symbolic}

\lstdefinestyle{myGrammarStyle}{
    basicstyle=\scriptsize\ttfamily, 
    commentstyle=\color{gray},
    keywordstyle=\color{blue},
    stringstyle=\color{orange},
    numbers=left, 
    numberstyle=\tiny\color{gray}, 
    breaklines=true, 
    frame=single, 
    framesep=3pt, 
    xleftmargin=5pt, 
    xrightmargin=5pt, 
    backgroundcolor=\color{gray!4}, 
    tabsize=2, 
    captionpos=b, 
    aboveskip=5pt, 
    belowskip=5pt, 
    linewidth=0.9\linewidth, 
    escapeinside={(*@}{@*)}, 
}

\begin{lstlisting}[style=myGrammarStyle, caption=Case Study for GSM-Symbolic]
Question: {name} hires a {vehicle} from {start_hour}  to {end_hour}. He gets {free_hours} hours free. The first paid hour is ${first_hour_cost} and each hour after that is {multiplier} the cost. How much did he pay?

Constrained: <<(int(end_hour - start_hour) - free_hours) * first_hour_cost + free_hours * first_hour_cost + (int(end_hour - start_hour) - free_hours - 1) * multiplier * first_hour_cost>>

Unconstrained With CoT:  Let's think step by step. The total time hired is from {start_hour} to {end_hour}, which is <<int((end_hour - start_hour).total_seconds() / 3600)>> hours. Subtracting the {free_hours} free hours, the paid hours are <<int((end_hour - start_hour).total_seconds() / 3600) - free_hours>>. The final answer is <<first_hour_cost + (int((end_hour - start_hour).total_seconds() / 3600) - free_hours - 1) * multiplier * first_hour_cost>>.

CRANE: Let's think step by step. The total time hired is from {start_hour} to {end_hour}, totaling <<int(end_hour - start_hour)>> hours. Subtracting {free_hours} free hours, the paid hours are <<int(end_hour - start_hour) - free_hours>>. The first hour costs {first_hour_cost}, and each additional hour costs {multiplier} times that. The final answer is <<first_hour_cost + (int(end_hour - start_hour) - free_hours - 1) * multiplier * first_hour_cost>>.


\end{lstlisting}
\label{gram:gsm_case_study}

\Tool{} effectively alternates between constrained and unconstrained generation to produce intermediate expressions, the final answer, and to maintain the reasoning capabilities of the LLM. In contrast, unconstrained generation with CoT results in a syntactically incorrect expression, while constrained generation produces a syntactically valid but incorrect expression.

\subsection{Sampling Ablation for GSM-Symbolic}
In our GSM-Symbolic case study, we use IterGen as the constrained generation baseline and initialize \Tool{} with IterGen. Both IterGen and \Tool{} employ selective rejection sampling to filter tokens that do not satisfy semantic constraints. For comparison, we also run unconstrained generation using temperature sampling and evaluate its performance against \Tool{}. Specifically, for Qwen2.5-1.5B-Instruct and Llama-3.1-8B-Instruct, we generate three samples with unconstrained generation at a temperature of \( t = 0.7 \) and compute pass@1/2/3 metrics. 

As shown in Table ~\ref{tab:rejection_sample_gsm}, \Tool{} with greedy decoding achieves higher accuracy than pass@1/2/3 for unconstrained generation with Chain-of-Thought (CoT) and temperature sampling on Qwen2.5-1.5B-Instruct. Although, for Llama-3.1-8B-Instruct, unconstrained generation with CoT and temperature sampling achieves a pass@3 accuracy of 35\%—2\% higher than \Tool{}—it generates approximately 4 times as many tokens as \Tool{}.

\begin{table*}[t]
    \centering
    \small
    \caption{Comparison of \Tool{} and greedy and sampling baselines with different models on GSM-Symbolic.}
    \begin{tabular}{llccr}
        \toprule
        \textbf{Model} & \textbf{Method} & \textbf{pass@1/2/3 (\%)} & \textbf{Parse (\%)} &  \textbf{Tokens} \\
        
\midrule
     & \stdUnconstrained{} (Greedy) & 21 & 97 & 23.34\\
     & \stdUnconstrained{} (t = 0.7) & 15/19/22 & 88/96/98 & 20.19/39.76/60.57\\
 & \stdConstrained{} (Greedy) & 22 & 97 & 25.29 \\
 Qwen2.5-1.5B-Instruct & \cotUnconstrained{} (Greedy) & 26 & 90 & 128.97\\
 & \cotUnconstrained{} (t = 0.7) & 21/25/30 & 78/91/96 & 146.22/292.96/444.61\\
 & \textbf{\Tool{}} & \textbf{31} & 100 & 131.3\\

\midrule

     & \stdUnconstrained{} (Greedy) & 21 & 73 & 128.38\\
     & \stdUnconstrained{} (t = 0.7) & 15/21/25  & 51/74/84 & 106.88/232.75/369.86\\
 & \stdConstrained{} (Greedy) & 26 & 98 & 35.97 \\
 Llama-3.1-8B-Instruct & \cotUnconstrained{} (Greedy) & 30 & 95 & 163.55 \\
 & \cotUnconstrained{} (t = 0.7) & 24/29/\textbf{35} & 89/98/98 & 196.01/403.68/607.7\\
 & \textbf{\Tool{}} (Greedy) & 33 & 95 & 170.22 \\

\bottomrule
    \end{tabular}
    \label{tab:rejection_sample_gsm}
    \vspace{-.2in}
\end{table*}

\begin{table*}[t]
    \centering
    \small
    \caption{Comparison of \Tool{} and greedy and sampling baselines with different models on FOLIO.}
    \begin{tabular}{llccr}
        \toprule
        \textbf{Model} & \textbf{Method} & \textbf{pass@1/2/3 (\%)} & \textbf{Compile (\%)} &  \textbf{Tokens} \\
    \midrule
    & \cotUnconstrained{} (Greedy) & 36.95 & 70.94 & 350.64  \\
    & \cotUnconstrained{} (t = 0.7) & 16.75/28.57/34.98 & 35.96/55.67/68.47 & 401.5/800.19/1219.33  \\
 Qwen2.5-7B-Instruct & \stdConstrained{} (Greedy) & 37.44 & 87.68 & 775.62 \\
 & \textbf{\Tool{}} (Greedy) & \textbf{42.36} & 87.68 & 726.88  \\

 \midrule
     & \cotUnconstrained{} (Greedy) & 32.02 & 57.14 & 371.52  \\
     & \cotUnconstrained{} (t = 0.7) & 14.29/22.66/29.06 & 33.99/46.8/57.64 & 435.35/877.33/1307.45  \\
 Llama-3.1-8B-Instruct & \stdConstrained{} (Greedy) & 39.41 & 86.21 & 549.75  \\
 & \textbf{\Tool{}} (Greedy) & \textbf{46.31} & 85.71 & 449.77  \\

\bottomrule
    \end{tabular}
    \label{tab:rejection_sample_fol}
    \vspace{-.2in}
\end{table*}

\subsection{Grammars}
\subsubsection{GSM-Symbolic Grammar}
\label{sec:gsm_grammar}

\lstdefinestyle{myGrammarStyle}{
    basicstyle=\scriptsize\ttfamily, 
    commentstyle=\color{gray},
    keywordstyle=\color{blue},
    stringstyle=\color{orange},
    numbers=left, 
    numberstyle=\tiny\color{gray}, 
    breaklines=true, 
    frame=single, 
    framesep=3pt, 
    xleftmargin=5pt, 
    xrightmargin=5pt, 
    backgroundcolor=\color{yellow!4}, 
    tabsize=2, 
    captionpos=b, 
    aboveskip=5pt, 
    belowskip=5pt, 
    linewidth=0.9\linewidth, 
    escapeinside={(*@}{@*)}, 
}

\begin{lstlisting}[style=myGrammarStyle, caption=GSM-Symbolic Grammar]
start: space? "<" "<" space? expr space? ">" ">" space?

expr: expr space? "+" space? term   
     | expr space? "-" space? term   
     | term

term: term space? "*" space? factor 
     | term space? "/" space? factor 
     | term space? "//" space? factor 
     | term space? "%" space? factor  
     | factor space?

factor: "-" space? factor    
       | TYPE "(" space? expr space? ")" 
       | primary space?

primary: NUMBER        
        | VARIABLE      
        | "(" space? expr space? ")"

TYPE.4: "int"

space: " "

%import common.CNAME -> VARIABLE
%import common.NUMBER
\end{lstlisting}
\label{gram:gsm_grammar}

\subsubsection{Prover9 Grammar}
\label{sec:prover9_grammar}

\lstdefinestyle{myGrammarStyle}{
    basicstyle=\scriptsize\ttfamily, 
    commentstyle=\color{gray},
    keywordstyle=\color{blue},
    stringstyle=\color{orange},
    numbers=left, 
    numberstyle=\tiny\color{gray}, 
    breaklines=true, 
    frame=single, 
    framesep=3pt, 
    xleftmargin=5pt, 
    xrightmargin=5pt, 
    backgroundcolor=\color{yellow!4}, 
    tabsize=2, 
    captionpos=b, 
    aboveskip=5pt, 
    belowskip=5pt, 
    linewidth=0.9\linewidth, 
    escapeinside={(*@}{@*)}, 
}

\begin{lstlisting}[style=myGrammarStyle, caption=Prover9 Grammar]
start: predicate_section premise_section conclusion_section

predicate_section: "Predicates:" predicate_definition+
premise_section: "Premises:" premise+
conclusion_section: "Conclusion:" conclusion+

predicate_definition: PREDICATE "(" VAR ("," VAR)* ")" COMMENT  -> define_predicate
premise: quantified_expr COMMENT -> define_premise
conclusion: quantified_expr COMMENT -> define_conclusion

quantified_expr: quantifier VAR "(" expression ")" | expression
quantifier: "{forall}" -> forall | "{exists}" -> exists

expression: bimplication_expr

?bimplication_expr: implication_expr ("{iff}" bimplication_expr)?  -> iff
?implication_expr: xor_expr ("{implies}" implication_expr)?  -> imply
?xor_expr: or_expr ("{xor}" xor_expr)?                -> xor
?or_expr: and_expr ("{or}" or_expr)?                -> or
?and_expr: neg_expr ("{and}" and_expr)?              -> and
?neg_expr: "{not}" quantified_expr                   -> neg 
        | atom

?atom: PREDICATE "(" VAR ("," VAR)* ")" -> predicate 
    | "(" quantified_expr ")" 

// Variable names begin with a lowercase letter
VAR.-1: /[a-z][a-zA-Z0-9_]*/  | /[0-9]+/

// Predicate names begin with a capital letter
PREDICATE.-1: /[A-Z][a-zA-Z0-9]*/

COMMENT: /:::.*\n/

%import common.WS
%ignore WS
\end{lstlisting}
\label{gram:prover9_grammar}

\end{document}